 \def\Xint#1{\mathchoice
{\XXint\displaystyle\textstyle{#1}}%
{\XXint\textstyle\scriptstyle{#1}}%
{\XXint\scriptstyle\scriptscriptstyle{#1}}%
{\XXint\scriptscriptstyle\scriptscriptstyle{#1}}%
\!\!\int}
\def\XXint#1#2#3{{\setbox0=\hbox{$#1{#2#3}{\int}$ }
\vcenter{\hbox{$#2#3$ }}\kern-.5\wd0}}
\def\dashint{\Xint-}
\def\pint{\frac{1}{2\pi}\int_{-\pi}^\pi\! {\rm d}}
\def\dpint{\frac{1}{2\pi}\dashint_{-\pi}^\pi \! {\rm d}}
\def\R{\mathbb{R}}
\newcommand{\phm}[1]{\phantom{#1}}
\def\bq{\begin{equation}}
\def\eq{\end{equation}}
\def\bqy{\begin{eqnarray}}
\def\eqy{\end{eqnarray}}
\def\p{\partial}
\begin{document}

\title*{Gradient flows in the normal and K\"ahler metrics and triple bracket generated metriplectic systems}
 \titlerunning{Gradient flows and metriplectic systems}
\author{Anthony M. Bloch, Philip J. Morrison, and Tudor S. Ratiu}
\institute{Anthony M. Bloch \at Department of Mathematics, The University of Michigan, 530 Church Street,  
Ann Arbor, MI 48109-1043, USA. 
\texttt{abloch@umich.edu}
\and Philip J. Morrison \at Department of Physics and Institute for Fusion Studies, University of Texas,
2515 Speedway Stop C1600, Austin, TX 78712-0264, USA.  
\texttt{morrison@physics.utexas.edu}
\and Tudor S. Ratiu \at Department of Mathematics and Bernoulli Center, 
Ecole Polytechnique F\'ed\'erale de Lausanne, CH-1015 Lausanne, Switzerland. 
\texttt{tudor.ratiu@epfl.ch} }

\maketitle

\abstract{
The dynamics of gradient and Hamiltonian flows with particular application to 
flows on adjoint orbits of a Lie group and the extension of this setting to flows on a loop group are discussed.
Different types of gradient flows that arise from different metrics including the  so-called
normal metric on adjoint orbits of a Lie group and the K\"ahler metric are compared.
It is discussed how a  K\"ahler metric can arise from a  complex structure induced
by  the Hilbert transform.  Hybrid and metriplectic flows that arise when 
one has both Hamiltonian and gradient components are examined. A class of metriplectic systems that is generated by completely antisymmetric triple brackets is described and for finite-dimensional systems given a Lie algebraic interpretation.  A variety of explicit examples of the several types of flows are given. 
}

\noindent {\bf Keywords:} loop groups, adjoint orbits, Hamiltonian systems, integrable systems, gradient flows,
metriplectic systems, thermodynamics

%\tableofcontents

%%%%%%%%%%%%%%%%%%%
%%%%%%%%%%%%%%%%%%%
%%%%%%%%%%%%%%%%%%%
\section{Introduction}
\label{Intro}

Dynamical systems, finite or infinite,  that describe physical phenomena typically have parts that are in some 
sense Hamiltonian and parts that can be recognized as dissipative, with the Hamiltonian part being generated by a Poisson bracket and the dissipative part being some kind of gradient flow.   The  description of Hamiltonian systems has received much attention over nearly two centuries and, although some forms of dissipation have  received general attention, the understanding and classification of  dissipative dynamics is a much broader topic and consequently less well developed.   Early modern treatments of geometric Hamiltonian mechanics include those of   \cite{souriau} and \cite{AbMa1978},  and the literature on this topic is now immense.  A special type of gradient flow that preserves invariants,  the double bracket formalism  described in \cite{Brockett1991} (see, e.g., \cite{Bloch1990}, \cite{bloch03}),  is a formalism that occurs in a variety of contexts   (see \cite{BlKrMaRa1994,BlKrMaRa1996}) and is well-adapted to practical numerical computations (see \cite{VaCarYo,FlierlMor}).   Examples of  infinite-dimensional gradient flows  include the Cahn-Hilliard systems  (see  \cite{otto}) and the celebrated  Ricci flows (see \cite{hamilton,chow}),  which are nonlinear diffusion-like equations.  A general form for combined  Hamiltonian and gradient flows was described  in \cite{Morrison1986},  where such flows were termed {\it metriplectic} flows (see also \cite{Oettinger2006,Morrison2009,LiMi2012}).  Thus, it is evident that there are a variety Hamiltonian and dissipative flows, and the purpose of this paper is to explore the form and geometric structure of such flows in both the ode and pde contexts.

Specifically,  in this paper we discuss  the dynamics of gradient and Hamiltonian 
flows,  with particular application to flows on adjoint orbits
of a Lie group and the extension of this setting to flows on a loop group.
We compare the different types of gradient 
flows that arise from different metrics, in particular,  the so-called
normal metric on adjoint orbits of a Lie group  and the K\"ahler metric.
We discuss how a K\"ahler metric can arise from the complex structure induced
from the Hilbert transform. We also consider flows that arise when 
one has both Hamiltonian and gradient structures present. In particular, 
we discuss metriplectic flows, flows that produce entropy while conserving energy.  We consider such flows  in both the finite and infinite settings, and discuss  a general class of metriplectic flows that arise from completely antisymmetric triple brackets.  For finite systems,  we show how the triple bracket has a natural Lie algebraic  formulation, and for infinite systems we give a procedure for constructing a quite general class of metriplectic pdes.  We also consider, hybrid flows, of Hamiltonian and gradient form, that dissipate energy.  Several examples of hybrid and metriplectic flows are given, including finite systems such as the Toda lattice on $\R$ and metriplectic $\mathfrak{so}(3)$ brackets.   Various infinite-dimensional examples including a $1+1$ dissipative systems that conserves energy, and hybrid systems such as  the KdV with dissipation, the  \cite{OS1969} equation that describes Landau damping, and others.

The paper is organized as follows.  In section \ref{metrics_on_orbits} we review material need for latter development.   In particular, we discuss  metrics on adjoint  orbits, Toda flows and the double bracket formulation.   Sections \ref{gradients_section} and \ref{sec_metriplectic} contain the main new results of the paper as described above.  In section \ref{gradients_section} we discuss metrics on loop groups and  related gradient flows, while in section \ref{sec_metriplectic} we discuss our results on  metriplectic systems, in both  finite- and infinite-dimensions, and give  examples.

%%%%%%%%%%%%%%%%%%%
%%%%%%%%%%%%%%%%%%%
%%%%%%%%%%%%%%%%%%%
\section{Metrics on adjoint orbits of compact Lie groups
and associated dynamical systems}
\label{metrics_on_orbits}

%%%%%%%%%%%%%%%%%%%
%%%%%%%%%%%%%%%%%%%
\subsection{Double bracket systems} 
\label{double}

Let ${\mathfrak g}_u$ be the compact real form of a complex semisimple 
Lie algebra ${\mathfrak g}$, $G_u$ a compact connected real Lie group
with Lie algebra $\mathfrak{g}_u$, and $\kappa$ the Killing form (on
$\mathfrak{g}$ or $\mathfrak{g}_u$, depending on the context).

The ``normal'' metric on the adjoint orbit $\mathcal{O}$ of $G_u$ 
through $L_0 \in \mathfrak{g}_u$ (see \cite{Atiyah1982}, 
\cite[Chapter 8]{Besse2008}) is given as follows. Decompose orthogonally 
${\mathfrak g}_u = {\mathfrak g}_u^L
\oplus{\mathfrak g}_{uL}$, relative to to the invariant inner 
product $\left\langle~~,~~\right\rangle: = -\kappa (~~,~~)$,  
where ${\mathfrak g}_{uL}: = \ker \operatorname{ad}_L$ is the centralizer 
of $L$ and ${\mathfrak g}_u^L= \operatorname{range} \operatorname{ad}_L$; 
as usual, $\operatorname{ad}_L:= [L, \cdot ]$.  For $X\in{\mathfrak g}_u$ 
denote by $X^L\in{\mathfrak g}_u^L$ and $X_L \in {\mathfrak g}_{uL}$ 
the orthogonal projections of $X$ on ${\mathfrak g}_u^L$ and 
${\mathfrak g}_{uL}$, respectively. Recall that a general vector tangent
at $L$ to the adjoint orbit $\mathcal{O}$ is 
necessarily of the form $[L, X]$ for some $X \in \mathfrak{g}_u$. The
\textit{normal metric} on $\mathcal{O}$ is the $G_u$-invariant 
Riemannian metric given 
by \begin{equation}
\label{normal_metric}
\left\langle[L,X], [L,Y] \right\rangle_{\rm normal} : = 
\left\langle X^L, Y^L \right\rangle\,
\end{equation}
for any $X,Y \in \mathfrak{g}_u$.

Fix $N \in \mathfrak{g}_{u}$ and consider the flow on the adjoint 
orbit $\mathcal{O}$ of $G_u$ through 
$L_0 \in \mathfrak{g}_u$ given by 
\begin{equation}
\frac{d}{dt}L(t) = \lbrack L(t),\lbrack L(t),N\rbrack\rbrack\,,
\qquad L(0) = L_0 \in \mathfrak{g}_u\,. 
\label{dbflow}
\end{equation}
We recall the following well-known result 
(\cite{Brockett1991}, \cite{Brockett1994}, \cite{BlBrRa1990}, \cite{BlBrRa1992}, \cite{BlFlRa1990}, \cite{BlIs2005}).

\begin{proposition}  
\label{double_bracket_prop}
The vector field given by the ordinary differential
equation \eqref{dbflow} is the gradient of the function 
$H(L)=\kappa (L,N)$ relative to the normal metric on $\mathcal{O}$.
\end{proposition}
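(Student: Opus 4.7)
The plan is to verify directly the defining identity of the Riemannian gradient: for every tangent vector $W$ to $\mathcal{O}$ at $L$, one must check that $\left\langle [L,[L,N]], W \right\rangle_{\rm normal} = dH_L(W)$. Every tangent vector to $\mathcal{O}$ at $L$ has the form $W = [L,X]$ for some $X \in \mathfrak{g}_u$, and the proposed vector field is itself of this form (take $X = [L,N]$), so it is automatically tangent to $\mathcal{O}$.

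First I would compute the differential. Since $H(L) = \kappa(L,N)$ is linear in $L$, we have $dH_L(V) = \kappa(V,N)$ for every $V$. Restricting to $V = [L,X]$ and applying ad-invariance of the Killing form twice,
$$\kappa([L,X], N) = \kappa(L, [X,N]) = \kappa([X,N], L) = -\kappa(X, [L,N]),$$
so, using $\left\langle ~,~\right\rangle = -\kappa(~,~)$, one obtains $dH_L([L,X]) = \left\langle [L,N], X \right\rangle$.

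Next I would compute the normal metric pairing using \eqref{normal_metric}:
$$\left\langle [L,[L,N]], [L,X] \right\rangle_{\rm normal} = \left\langle [L,N]^L, X^L \right\rangle.$$
The key observation is that $[L,N]$ lies tautologically in $\mathfrak{g}_u^L = \operatorname{range} \operatorname{ad}_L$, so its projection onto $\mathfrak{g}_u^L$ is itself, i.e.\ $[L,N]^L = [L,N]$. Moreover, since the decomposition $\mathfrak{g}_u = \mathfrak{g}_u^L \oplus \mathfrak{g}_{uL}$ is orthogonal and $[L,N] \in \mathfrak{g}_u^L$, the projection on $X$ can be dropped: $\left\langle [L,N], X^L \right\rangle = \left\langle [L,N], X \right\rangle$. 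This matches the expression obtained for $dH_L([L,X])$, completing the verification.

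The main obstacle here is really only bookkeeping: keeping straight the sign convention $\left\langle ~,~\right\rangle = -\kappa(~,~)$, consistently applying ad-invariance of $\kappa$, and noticing that the two projections in the normal metric conspire so that only $[L,N] \in \mathfrak{g}_u^L$ matters while the projection $X^L$ is absorbed by orthogonality. As a sanity check, the identity respects the ambiguity in the representation $W = [L,X]$: if $X$ is changed by an element of $\mathfrak{g}_{uL} = \ker \operatorname{ad}_L$ the tangent vector $[L,X]$ is unchanged, and indeed both $\left\langle [L,N], X \right\rangle$ and the normal-metric pairing $\left\langle [L,N], X^L \right\rangle$ depend only on $X^L$.
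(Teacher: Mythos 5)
Your proof is correct and uses essentially the same ingredients as the paper's: ad-invariance of $\kappa$, the definition \eqref{normal_metric} of the normal metric, and the facts that $[L,N]\in\mathfrak{g}_u^L$ and that the decomposition $\mathfrak{g}_u=\mathfrak{g}_u^L\oplus\mathfrak{g}_{uL}$ is orthogonal. The only difference is presentational: the paper writes $\operatorname{grad}H(L)=[L,X^L]$ with $X^L$ unknown and solves $X^L=[L,N]$ from the defining identity, whereas you verify the stated candidate directly; the computations coincide.
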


\begin{proof}  By the definition of the gradient 
$\operatorname{grad}H(L) \in T_L\mathcal{O}\subset 
\mathfrak{g}_{u}$ relative to the normal metric, we have 
for any 
$L \in \mathcal{O}$ and $\delta L \in \mathfrak{g}_u$,
\begin{equation}
dH(L)\cdot\lbrack L,\delta L\rbrack = 
\langle\operatorname{grad}\,H(L),\lbrack L,\delta L\rbrack
\rangle_{\rm normal} \label{grad}
\end{equation}
where $\cdot$ denotes the natural pairing between 1-forms and tangent 
vectors and $\lbrack L,\delta L\rbrack$ is an arbitrary tangent vector 
at $L$ to $\mathcal{O}$.  Set $\text {grad}\,H(L) = 
\lbrack L,X\rbrack
=\lbrack L,X^L\rbrack$.  Then (\ref{grad}) becomes
$$
-\langle\lbrack L,\delta L\rbrack ,N\rangle = 
\langle\lbrack L,X\rbrack ,\lbrack L,\delta L\rbrack 
\rangle_{\rm normal}
$$
or, equivalently,
$$
\langle\lbrack L,N\rbrack ,\delta L\rangle = 
\langle X^L,\delta L^L\rangle = 
\langle X^L,\delta L\rangle\,.
$$
Since $[L,N] \in \mathfrak{g}_u^L$, this implies that 
$X^L=\lbrack L,N\rbrack$, and hence 
$\operatorname{grad}\,H(L) = \lbrack L,\lbrack L,N\rbrack\rbrack$,
as stated. \hfill $\blacksquare$
\end{proof}

The same computation, for a general function 
$H \in C ^{\infty}(\mathfrak{g}_{u})$, yields
\begin{equation}
\label{gen_double_bracket}
\operatorname{grad}\,H(L) = - [L,[L, \nabla H(L)]]
\end{equation}
where $\nabla H(L)$ denotes the gradient of the function $H$
relative to the invariant inner product 
$\left\langle~~,~~\right\rangle: = -\kappa (~~,~~)$, i.e.,
$dH(L)\cdot X = \left\langle \nabla H(L), X \right\rangle$ for
any $X \in \mathfrak{g}_u$.

%%%%%%%%%%%%%%%%%%%
%%%%%%%%%%%%%%%%%%%
\subsection{The finite Toda system}

 The double bracket equation \eqref{dbflow} is intimately related
to the finite non-compact Toda lattice system. This is a Hamiltonian system
modeling $n$ particles moving freely on the $x$-axis
and interacting under an exponential potential. Denoting the position
of the $k$th particle by $x_k$, the Hamiltonian is given by
\[
H(x,y)={1\over 2}\sum_{k=1}^{n}y^2_k
+\sum_{k=1}^{n-1}e^{x_k-x_{k+1}}
\]
and hence the associated Hamiltonian equations are
%-----------------------------
\begin{align}
\dot x_k =   {\partial H\over \partial y_k}=y_k  \,,  \qquad
\dot y_k =  -{\partial H\over \partial x_k}= 
e^{x_{k-1}-x_k}-e^{x_{k}-x_{k+1}} \,,
\label{todaeqns}
\end{align}
%-----------------------------
where we use the conventions $e^{x_0-x_1}=e^{x_n-x_{n+1}}=0$,
which corresponds to formally setting $x_0=-\infty$ and
$x_{n+1}=+\infty$.

This system of equations has an extraordinarily rich structure. Part
of this is revealed by Flaschka's change of
variables (\cite{Flaschka1974}) given by
%-----------------------------
\begin{equation}
a_k=\frac{1}{2}e^{(x_k-x_{k+1})/2} \quad \mbox{and} 
\quad b_k=-\frac{1}{2}y_k\,.
\end{equation}
%-----------------------------
which transform \eqref{todaeqns} to
%-----------------------------
\[
\left\{
\begin{aligned}
\dot a_k&= a_k(b_{k+1}-b_k)    \,, \quad k=1,\dots,n-1  \,,  \\
\dot b_k&= 2(a_k^2 -a_{k-1}^2) \,, \quad k=1,\dots,n    \,,
\end{aligned}
\right.
\]
%-----------------------------
with the boundary conditions $a_0=a_n=0$. This system is equivalent
to the Lax equation
%-----------------------------
\begin{equation}
{d\over dt}L=[B,L]=BL-LB\,,
\label{lax}
\end{equation}  where
%---------------------------------------
\begin{equation}
\label{lb_matrices}
L  =
\left(
\begin{matrix}
b_{1}  & a_{1} & 0      & \cdots  & 0       \\
a_{1}  & b_{2} & a_{2}  & \cdots  & 0       \\
\vdots &       & \ddots &         & \vdots  \\
     0 &       & \cdots & b_{n-1} & a_{n-1} \\
     0 &       & \cdots & a_{n-1} & b_{n}
\end{matrix}
\right)
\,,
%-----------------------------
\qquad  B   =
\left(
\begin{matrix}
   0 & a_1 & 0      & \cdots  & 0               \\
-a_1 & 0   & a_2    & \cdots  & 0               \\
\vdots     &     & \ddots &         &\vdots     \\
   0 &     & \cdots & 0       & a_{n-1}         \\
   0 &     & \cdots & - a_{n-1} & 0 
\end{matrix}
\right)
\,.
\end{equation}
%---------------------------------------
If $O(t)$ is the orthogonal matrix solving the equation
\[
{d\over dt}O=BO\,,\qquad O(0)= \; \mbox{Identity} \,,
\]
then from (\ref{lax}) we have
\[
{d\over dt}(O^{-1}LO)=0\,.
\]

Thus, $O^{-1}LO=L(0)$, i.e., $L(t)$ is related to $L(0)$ by 
conjugation with an orthogonal matrix and thus 
the eigenvalues of $L$, 
which are real and distinct, are preserved along the flow. 
This is enough to show that this system is explicitly solvable 
or integrable.  Equivalently, after fixing the center of mass, 
i.e., setting $b_1 + \cdots + b_n = 0$, the $n-1$ integrals 
in involution whose differentials are linearly independent on 
an open dense set of phase space $\{(a_1, \ldots, a_{n-1}, 
b_1, \ldots, b_n) \mid b_1 + \cdots + b_n = 0\}$ are 
$\operatorname{Tr}L^2, \ldots, \operatorname{Tr}L^n$.

%%%%%%%%%%%%%%%%%%%
%%%%%%%%%%%%%%%%%%%
\subsection{Lie algebra integrability of the Toda  system}

Let us quickly recall the well-known Lie algebraic approach to 
integrability of the Toda lattice. Let $\mathfrak{g}$ be a Lie algebra 
with an invariant non-degenerate bilinear symmetric form 
$\left\langle\,,\right\rangle$, i.e.,
$\left\langle[\xi, \eta], \zeta \right\rangle = 
\left\langle\xi, [\eta, \zeta] \right\rangle$ for all $\xi, \eta,\zeta
\in \mathfrak{g}$ and $\left\langle \xi, \cdot \right\rangle = 0$
implies $\xi=0$. Suppose that $\mathfrak{k}, \mathfrak{s} \subset 
\mathfrak{g}$ are Lie subalgebras and that, \textit{as vector spaces},
$\mathfrak{g} = \mathfrak{k}\oplus\mathfrak{s}$. Let $\pi_{\mathfrak{k}}
: \mathfrak{g}\rightarrow \mathfrak{k}$, $\pi_{\mathfrak{s}}:\mathfrak{g}
\rightarrow \mathfrak{s}$ be the two projections induced by this vector
space direct sum decomposition. Since $\mathfrak{g}\ni
\xi \stackrel{\sim}\longmapsto \left\langle \xi, \cdot \right\rangle
\in \mathfrak{g}^\ast$ is a vector space isomorphism, it naturally
induces the isomorphisms $\mathfrak{k}^\perp \cong \mathfrak{s} ^\ast$, 
$\mathfrak{s}^\perp \cong \mathfrak{k}^\ast$. By non-degeneracy of
$\left\langle\,, \right\rangle$, we have $\mathfrak{g} = 
\mathfrak{s}^\perp \oplus \mathfrak{k}^\perp$; denote by 
$\pi_{\mathfrak{k}^\perp}: \mathfrak{g}\rightarrow \mathfrak{k}^\perp$, 
$\pi_{\mathfrak{s}^\perp}:\mathfrak{g}\rightarrow \mathfrak{s}^\perp$
the two projections induced by this vector space direct sum decomposition.
In particular, $\mathfrak{g}$, $\mathfrak{s}^\perp$, $\mathfrak{k}^\perp$
all carry natural Lie-Poisson structures. The (-)Lie-Poisson bracket of 
$\mathfrak{s}^\ast \cong \mathfrak{k}^\perp$ is given by
\begin{equation}
\label{pb_general}
\{\varphi, \psi\}(\xi) = - \left\langle \xi, 
\left[\pi_{\mathfrak{s}}\nabla\varphi(\xi) , 
\pi_{\mathfrak{s}}\nabla\psi(\xi) \right] \right\rangle, \qquad 
\xi\in \mathfrak{k}^\perp,
\end{equation}
where $\varphi, \psi: \mathfrak{k}^\perp \rightarrow\mathbb{R}$ are
any smooth functions, extended arbitrarily to smooth functions, also
denoted by $\varphi$ and $\psi$, on
$\mathfrak{g}$ and $\nabla\varphi$, $\nabla\psi$ are the gradients of 
these arbitrary extensions relative to $\left\langle\,, \right\rangle$.
This formula follows from the fact that the gradient on 
$\mathfrak{k}^\perp$ of $\varphi|_{\mathfrak{k}^\perp}$, which is an 
element of $\mathfrak{s}$ due to the isomorphism $\mathfrak{k}^\perp 
\cong \mathfrak{s} ^\ast $,  equals $\pi_{\mathfrak{s}}\nabla\varphi$.
Thus, the Hamiltonian vector field of $\psi \in 
C ^{\infty}(\mathfrak{k}^\perp)$, given by 
$\dot{\varphi} = \{\varphi, \psi\}$ for any $\varphi \in 
C ^{\infty}(\mathfrak{k}^\perp)$, has the expression
\begin{equation}
\label{ham_vf_general}
X_\psi(\xi) = - \pi_{\mathfrak{k}^\perp}
\left[\pi_\mathfrak{s}  \nabla\psi( \xi) , \xi\right], 
\qquad \xi\in \mathfrak{k}^\perp
\end{equation}
with the same conventions as above.

If $\psi \in C ^{\infty}(\mathfrak{g})$ is invariant, i.e., $[\nabla \psi(\zeta), \zeta] =0$ for all $\zeta\in \mathfrak{g}$, then  \eqref{ham_vf_general} simplifies to
\begin{equation}
\label{ham_vf_inv}
X_\psi(\xi) = \left[\pi_\mathfrak{k}  \nabla\psi( \xi) , \xi\right]
= - \left[\pi_\mathfrak{s}  \nabla\psi( \xi) , \xi\right],
\qquad \xi\in \mathfrak{k}^\perp.
\end{equation}
The Adler-Kostant-Symes Theorem (see \cite{Adler1979}, \cite{Kostant1979},
\cite{Symes1980a, Symes1980b}, and \cite{Ratiu1980b} for many theorems
of the same type) states that if $\varphi$ and $\psi$ are both invariant
functions on $\mathfrak{g}$, then $\{\varphi, \psi\} = 0$ on 
$\mathfrak{k}^\perp$ which is equivalent to the commutation of the 
flows of the Hamiltonian vector fields  \eqref{ham_vf_inv}.

Suppose that $G = KS$, where $G$ is a Lie group with Lie algebra 
$\mathfrak{g}$ and $K, S \subset G$ are closed subgroups with Lie
algebras $\mathfrak{k}$ and $\mathfrak{s} $, respectively. The writing
$G=KS$ means that each element $g\in G$ can be uniquely decomposed as
$g = ks$, where $k \in K$ and $s\in S$ and that this decomposition 
defines a smooth diffeomorphism $K \times S \approx G$. The coadjoint
action of $S$ on $\mathfrak{s}^\ast$ has the following expression, if
$\mathfrak{s}^\ast$ is identified with $\mathfrak{k}^\perp$ via
$\left\langle\,,\right\rangle$: if $s \in S$, $\xi\in \mathfrak{k}^\perp$, then $s \cdot \xi= \pi_{\mathfrak{k}^\perp} \operatorname{Ad}_{s}\xi$,
where $\operatorname{Ad}_s\xi$ is the adjoint action in $G$ of the
element $s \in S \subset G$ on $\xi\in \mathfrak{k}^\perp \subset \mathfrak{g}$.

For the Toda lattice \eqref{lax}, this general setup applies in the following way. Let $G = \operatorname{GL}(n, \mathbb{R})$, $K = 
\operatorname{SO}(n)$, 
$S=\{\text{invertible lower triangular matrices}\}$, $G=KS$ is the
Gram-Schmidt orthonormalization process, 
$\mathfrak{g} = \mathfrak{gl}(n, \mathbb{R})$, $\mathfrak{k} = 
\mathfrak{so}(n)$, $\mathfrak{s} = \{\text{lower triangular matrices}\}$,
$\left\langle\xi, \eta\right\rangle: = 
\operatorname{Tr}(\xi\eta)$ for
all $\xi, \eta\in \mathfrak{gl}(n, \mathbb{R})$, 
$\mathfrak{k}^\perp =\mathfrak{sym}(n)$ the vector space of 
symmetric matrices, and $\mathfrak{s}^\perp = \mathfrak{n}$, 
the nilpotent Lie algebra of strictly lower triangular 
matrices. The set of matrices $L$ in \eqref{lb_matrices} 
is a union of $S$-coadjoint orbits parametrized by
the value of the trace; for example, the set of trace zero 
matrices $L$ of the form \eqref{lb_matrices} equals the
$S$-coadjoint orbit through the symmetric matrix that
has everywhere zero entries with the exception of the upper and lower
first diagonals where all entries are equal to one. Thus, the Toda lattice is a 
Poisson system  whose restriction to a symplectic leaf is a classical 
Hamiltonian system with $n-1$ degrees of freedom. The Hamiltonian
of the Toda lattice is $\frac{1}{2}\operatorname{Tr}L^2$ and the 
$f_k(L) : =\frac{1}{k}\operatorname{Tr}L^k$,  $k=1, \ldots, n-1$ are the
$n-1$ integrals in involution (by the Adler-Kostant-Symes Theorem) and are
generically independent.

%%%%%%%%%%%%%%%%%%%
%%%%%%%%%%%%%%%%%%%
\subsection{The Toda system as a double bracket equation} 

If $N$ is the matrix $\operatorname{diag}\{1,2,\dots ,n\}$,
the Toda equations \eqref{lax} may be written in the 
double bracket form \eqref{dbflow} for $B:=[N,L]$. This was 
shown in \cite{Bloch1990}; the consequences of this fact 
were further analyzed for general compact Lie algebras in 
\cite{BlBrRa1990}, \cite{BlBrRa1992}, and \cite{BlFlRa1990}. 
As shown in Proposition \ref{double_bracket_prop}, the
double bracket equation, with $L$ replaced by ${\rm i}L$ and  
$N$ by ${\rm i}N$, restricted to a level set of the 
integrals described above, i.e., restricted to a generic
adjoint orbit of $\operatorname{SU}(n)$, is the gradient 
flow of the function ${\rm Tr} LN$ with respect to the normal 
metric; see \cite{BlFlRa1990} for this approach.

This observation easily implies that the flow tends
asymptotically to a diagonal matrix with the eigenvalues of 
$L(0)$ on the diagonal and ordered according to magnitude, 
recovering the result of \cite{Moser1975}, \cite{Symes1982}, 
and \cite{DeNaTo1983}.
 
%%%%%%%%%%%%%%%%%%%
%%%%%%%%%%%%%%%%%%%
\subsection{Riemannian metrics on $\mathcal{O}$} 

Now,  we recall that, in  addition to the normal metric on an 
adjoint orbit, there are other natural $G_u$-invariant metrics: the induced 
and the group invariant K\"ahler
metrics (as discussed in \cite[\S4]{Atiyah1982}, \cite{AtPr1983}, and \cite[Chapter 8]{Besse2008}).

Firstly, there is the \textit{induced metric} $b$ on 
$\mathcal{O}$, defined by $b: = \iota^\ast\left(-\kappa(~~,~~)
\right)$, where $\iota: \mathcal{O}\hookrightarrow 
\mathfrak{g}_u$ is the inclusion and $\left\langle~~,~~ \right\rangle:=
-\kappa(~~,~~)$ is thought of as a  constant 
Riemannian metric on $\mathfrak{g}_u$. Therefore,
\begin{equation}
\label{induced_metric}
b(L)([L,X], [L,Y]): = \left\langle [L,X], [L,Y] \right\rangle
\end{equation}
for any $L \in \mathcal{O}$, $X,Y \in \mathfrak{g}_u$. The induced
metric on $\mathcal{O}$ is also $G_u$-invariant.

Secondly, there are the $G_u$-invariant K\"ahler metrics on
$\mathcal{O}$ compatible with the natural complex structure (of
course, induced by the complex structure of $G$).
These are in bijective correspondence (by the transgression
homomorphism) with the set of $G_u$-invariant sections of the
trivial vector bundle over $\mathcal{O}$ whose fiber at $L \in 
\mathcal{O}$ is the center of $\ker \left(\operatorname{ad}_L\right)$
and whose scalar product with all positive roots is positive
(\cite[Proposition 8.83]{Besse2008}). Among these, there is
the $G_u$-invariant K\"ahler metric $b_2$ which is compatible 
with both the natural complex structure on $\mathcal{O}$ and has as 
imaginary part the orbit symplectic structure; $b_2$ is called the 
\textit{standard K\"ahler metric} on $\mathcal{O}$.

The $G_u$-invariant Riemannian metrics on a maximal dimensional orbit 
$\mathcal{O}$ are completely determined by $T$-invariant inner 
products on the direct sum of the two dimensional root spaces of
$\mathfrak{g}_u$, which is the tangent space to $\mathcal{O}$
at the point $L_0 \in \mathfrak{t}$ in the interior of the 
positive Weyl chamber;
recall that $\mathcal{O}$ intersects the positive Weyl chamber in
a unique point. The negative of the Killing form induces on each such
2-dimensional space an inner product. This inner product, left translated
at all points of $\mathcal{O}$ by elements of $G_u$, yields the normal 
metric on $\mathcal{O}$. Any other $G_u$-invariant inner product on
$\mathcal{O}$ is obtained by left translating at all points of 
$\mathcal{O}$ the inner product on this direct sum of 2-dimensional
root spaces obtained by multiplying in each 2-dimensional summand
the inner product with a positive real constant.

Since $L_0$ lies in the interior of the positive Weyl chamber 
(because $\mathcal{O}$ is maximal dimensional),
$\alpha(L_0) >0$ for all positive roots $\alpha$ of $\mathfrak{g}_u$.
Then the constant by which the natural inner product on the 2-dimensional
root space needs to be multiplied in order to get the standard K\"ahler metric is $\alpha(L_0)$, whereas to get the induced metric, it is
$\alpha(L_0)^2$ (\cite[Remark 2 in \S4]{Atiyah1982}). We can formulate
this differently, as in \cite{BlFlRa1990}. Since, by \eqref{induced_metric} and \eqref{normal_metric},
\begin{align*}
b(L)([L,X], [L,Y]) &= \left\langle [L,X], [L,Y] \right\rangle
= \left\langle [L,X^L], [L,Y^L] \right\rangle
= \left\langle -[L,[L,X^L]], Y^L \right\rangle
= \left\langle -[L,[L,X^L]]^L, Y^L \right\rangle \\
&= \left\langle - \operatorname{ad}_L^2[L,X],[L,Y] \right\rangle_{\rm normal}
\end{align*}
we have 
\begin{equation}
\label{induced_normal}
b(L)([L,X], [L,Y]) = b_1(L)(\mathcal{A}(L)^2[L,X], [L,Y]),
\end{equation}
where we denote now by $b_1$ the normal metric and $\mathcal{A}(L) := 
\sqrt{\left({\rm i}\operatorname{ad}_L\right)^2}$ is the positive 
square root of $\left({\rm i}\operatorname{ad}_L\right)^2 
= - \operatorname{ad}_L ^2 = \mathcal{A}(L)^2$. The standard K\"ahler metric on 
$\mathcal{O}$ is then given by 
\begin{equation}
\label{kahler_normal}
b_2(L)[L,X], [L,Y]) = b_1(\mathcal{A}(L)[L,X], [L,Y]).
\end{equation}
Note that, as opposed to the normal and induced metrics which have
explicit expressions, the standard K\"ahler metric on $\mathcal{O}$
requires the spectral decomposition of $\mathcal{A}(L)$ at any
point $L \in \mathcal{O}$. Or, as explained above, one expresses
it at the point $L_0$ in the positive Weyl chamber in terms of the
positive roots and then left translates the resulting inner product
at any point of $\mathcal{O}$. The normal metric does not depend on
the operators $\mathcal{A}(L)$, whereas the standard K\"ahler and
induced metrics do.

%%%%%%%%%%%%%%%%%%%
%%%%%%%%%%%%%%%%%%%
%%%%%%%%%%%%%%%%%%%
\section{Gradient flows on the loop group of the circle}
\label{gradients_section}

In this section we introduce three weak Riemannian metrics on the
subgroup of average zero functions of the connected component of the
loop group $\widetilde{\operatorname{L}}(S^1)$ of the circle, analogous 
to the normal, standard K\"ahler, and induced metrics on adjoint orbits of compact semisimple Lie groups. Of course, we
shall not work on adjoint orbits of this group because they degenerate
to points, $\widetilde{\operatorname{L}}(S^1)$ being a commutative group. 
Then we shall compute the gradient flows for these three metrics.

%%%%%%%%%%%%%%%%%%%
%%%%%%%%%%%%%%%%%%%
\subsection{The loop group of $S^1$} 

Recall (e.g., \cite{PrSe1986}) that the loop group 
$\widetilde{\operatorname{L}}(S^1)$ of the circle $S^1$ consists of 
smooth maps of $S^1$ to $S^1$. With pointwise multiplication, 
$\widetilde{\operatorname{L}}(S^1)$ is a commutative group. Often, 
elements of $\widetilde{\operatorname{L}}(S^1)$ are written as 
$e^{{\rm i}f}$, where $f \in \widetilde{\operatorname{L}}(\mathbb{R}):=
\left\{g : [-\pi,\pi] \to \mathbb{R}\mid g\text{ is } C^{\infty},\; 
g(\pi) = g(-\pi) + 2n\pi, \text{ for some } n \in \mathbb{Z}\right\}$; 
$n$ is the \textit{winding number} of the closed curve $[-\pi, \pi] \ni
t \mapsto e^{\mathrm{i}g(t)} \in S^1$ about the origin.
More precisely, there is an exact sequence of groups
\begin{eqnarray*}
\begin{array}{ccccccccccc}
0& \longrightarrow & \mathbb{Z}& \longrightarrow &\widetilde{\operatorname{L}}(\mathbb{R})& \stackrel{\widetilde{\operatorname{exp}}}\longrightarrow 
&\widetilde{\operatorname{L}}(S^1)& \longrightarrow &\mathbb{Z} &\longrightarrow &0\\
&&n&\longmapsto& 2 \pi n;\quad f &\longmapsto& e^{\mathrm{i}f}&\longmapsto &\frac{f(\pi) - f(-\pi)}{2 \pi}&&
\end{array}
\end{eqnarray*}
which shows that $\operatorname{ker}{\widetilde{\exp}} = \mathbb{Z}$ and $
\operatorname{coker}{\widetilde{\exp}} = \{0\}$. Thus the 
connected components of 
$\widetilde{\operatorname{L}}(S^1)$ are indexed by the winding number. 
The connected component of the identity 
$\widetilde{\operatorname{L}}(S^1)_0$ consists of loops
with winding number zero about the origin. 

If one insists on working with smooth loops, then one can consider 
$\widetilde{\operatorname{L}}(S^1)$ and 
$\widetilde{\operatorname{L}}(S^1)_0$
as Fr\'echet Lie groups either in the convenient calculus of
\cite{KrMi1997} or in the tame category of \cite{Hamilton1982}. 

Alternatively, one can work with loops $e^{{\rm i}f}$
for $f: [-\pi, \pi ] \rightarrow \mathbb{R}$ of Sobolev
class $H^s$, where $s \geq 1$ (or appropriate $W ^{s,p}$ or H\"older 
spaces). By standard theory (see, e.g., \cite{Palais1968} or 
\cite{AdFo2003}), it is checked that $\widetilde{\mathrm{L}}(S^{1})$ 
is a Hilbert Lie group (see, e.g., \cite{Bourbaki1998} or 
\cite{Neeb2004}). We shall not add the index $s$ on 
$\widetilde{\operatorname{L}}(\mathbb{R})$ and 
$\widetilde{\operatorname{L}}(S^1)$; from now on we
work exclusively in this category of $H^s$ Sobolev class maps and loops. 
A simple proof of the fact 
that $\widetilde{\operatorname{L}}(\mathbb{R})$  is a Hilbert Lie
group was given to us by K.-H. Neeb. First, note that 
$\widetilde{\operatorname{L}}(\mathbb{R})$
is a closed additive subgroup of the Hilbert space $H^s(\mathbb{R})
:=\{h:\mathbb{R}\rightarrow \mathbb{R}\mid h\text{ of class } H^s\}$.
Second, $\widetilde{\operatorname{L}}(\mathbb{R}) 
= \widetilde{\operatorname{L}}(\mathbb{R})_0 \times \mathbb{Z}$ 
as topological groups, where 
$\widetilde{\operatorname{L}}(\mathbb{R})_0 : = \{g \in 
\widetilde{\operatorname{L}}(\mathbb{R})\mid g(\pi) = g(-\pi)\}$
is the closed vector subspace of $H^s(\mathbb{R})$ consisting
of periodic functions; hence it is an additive Hilbert Lie group.
Therefore, there is a unique Hilbert Lie group structure on
$\widetilde{\operatorname{L}}(\mathbb{R})$ for which 
$\widetilde{\operatorname{L}}(\mathbb{R})_0$ is the connected
component of the identity. For general criteria that characterize
Lie subgroups in infinite dimensions, see \cite[Theorem IV.3.3]{Neeb2006}
(even for certain classes of Lie groups modeled on locally convex
spaces). Third, since $\widetilde{\operatorname{exp}}: \widetilde{\operatorname{L}}(\mathbb{R}) \rightarrow 
\widetilde{\operatorname{L}}(S^1)$ maps bijectively 
each connected component of 
$\widetilde{\operatorname{L}}(\mathbb{R})$ to a connected 
component of $\widetilde{\operatorname{L}}(S^1)$, it induces a Hilbert
Lie group structure on $\widetilde{\operatorname{L}}(S^1)$.

The commutative Hilbert Lie algebra of $\widetilde{\operatorname{L}}(S^1)$ 
is clearly $H^s(S^1, \mathbb{R}): = \{u:S^1\rightarrow \mathbb{R}\mid u
\text{ of class } H^s\}$, the space of periodic $H^s$ maps, 
and the exponential map $\operatorname{exp}:
H^s(S^1, \mathbb{R}) \rightarrow \widetilde{\operatorname{L}}(S^1)$
is given by $\operatorname{exp}(u)(\theta) = e^{ {\rm i}u(\theta)}$,
where $\theta \in \mathbb{R}/2\pi\mathbb{Z} = S^1$.

%%%%%%%%%%%%%%%%%%%
%%%%%%%%%%%%%%%%%%%
\subsection{The based loop group of $S^1$} 

The inner product
on the Hilbert space $L^2(S^1)$ of $L^2$ real valued functions on 
$S^1$ is defined by
\[
\left\langle f, g \right\rangle: = \pint \theta\, 
f(\theta) g(\theta)\,, \quad f,g \in L^2(S^1).
\]

Following \cite{Pressley1982} and \cite{AtPr1983}, we introduce the
closed Hilbert Lie subgroup $\operatorname{L}(S^1): = \{\varphi \in 
\widetilde{\operatorname{L}}(S^1)\mid \varphi(1) = 1\}$ of 
$\widetilde{\operatorname{L}}(S^1)$ whose closed commutative
Hilbert Lie algebra is $\operatorname{L}(\mathbb{R}) : = 
\{u \in H^s(S^1, \mathbb{R})\mid u(1) = 0\}$. The exponential map 
$\exp: \operatorname{L}(\mathbb{R}) \ni u \mapsto e^{{\rm i}u} \in 
\operatorname{L}(S^1)$ is a Lie group isomorphism 
(with $\operatorname{L}(\mathbb{R})$ thought of as a commutative group
relative to addition), a fact that will play a
very important role later on (see also \cite[page 151, \S8.9]{PrSe1986}).

There is a natural 
2-cocycle $\omega$ on $\operatorname{L}(\mathbb{R})$, namely
\begin{equation}
\label{cocycle}
\omega(u,v) : = 
\pint \theta\,  u'(\theta) v(\theta) = 
\left\langle u', v \right\rangle\,,
\end{equation}
where $u':=d u/d\theta$. Therefore, there is a central extension of Lie algebras
\[
0 \longrightarrow \mathbb{R}\longrightarrow
\widehat{\operatorname{L}(\mathbb{R})}\longrightarrow
\operatorname{L}(\mathbb{R})\longrightarrow 0
\]
which, as shown in \cite{Segal1981}, integrates to a central extension of 
Lie groups
\[
1 \longrightarrow S^1\longrightarrow
\widehat{\operatorname{L}(S^1)}\longrightarrow
\operatorname{L}(S^1)\longrightarrow 1.
\]
The ``geometric duals'' of $\operatorname{L}(\mathbb{R})$ and 
$\widehat{\operatorname{L}(\mathbb{R})} = \mathbb{R} \oplus 
\operatorname{L}(\mathbb{R})$ are themselves, relative to the weak 
$L^2$-pairing. It turns out that 
the coadjoint action of $\widehat{\operatorname{L}(S^1)}$ on 
$\widehat{\operatorname{L}(\mathbb{R})}$ preserves $\{1\} \oplus 
\operatorname{L}(\mathbb{R})$ so that, as usual, the coadjoint action of 
$\widehat{\operatorname{L}(S^1)}$ on $\operatorname{L}(\mathbb{R})$ is an
affine action which, in this case, because the group is commutative,
equals 
\[
\operatorname{Ad}_{e^{{\rm i}f}}^\ast \mu = \frac{f'}{f} = \left(\log|f| \right)'
\qquad e^{{\rm i}f} \in \operatorname{L}(S^1),\quad \mu \in 
\operatorname{L}(\mathbb{R}).
\]
Thus, the orbit of the constant function 0 is 
$\widehat{\operatorname{L}(S^1)}/S^1$
(where the denominator is thought of as constant loops), i.e., it equals
$\operatorname{L}(S^1)$. Therefore, every element  
$u \in \operatorname{L}(\mathbb{R})$ of its Lie algebra has, in  
Fourier representation, vanishing zero order Fourier coefficient , i.e.,
$\widehat{u}(0) = 0$.

Thus, the based loop group is a coadjoint orbit
of its natural central extension and, according  to 
\S\ref{metrics_on_orbits}, has three distinguished weak Riemannian 
metrics. These were computed explicitly in \cite{Pressley1982}, 
\cite{AtPr1983}, \cite{PrSe1986}; we recall them below.

%%%%%%%%%%%%%%%%%%%
%%%%%%%%%%%%%%%%%%%
\subsection{$\operatorname{L}(S^1)$ as a weak K\"ahler manifold}

Note that on $\operatorname{L}(\mathbb{R})$, the cocycle \eqref{cocycle} 
is weakly non-degenerate. Therefore, left (or right) translating it at
every point of the group $\operatorname{L}(S^1)$ yields a weakly 
non-degenerate closed two-form, i.e., a symplectic form. Thus, as expected,
since it is a coadjoint orbit, the Hilbert Lie group 
$\operatorname{L}(S^1)$ carries an invariant symplectic form whose 
value at the identity element 1 (the constant loop equal to 1) is
given by \eqref{cocycle}.
\smallskip

Now we introduce the \textit{Hilbert transform} on the circle
\begin{equation}
\label{Hilbert_transform}
\mathcal{H}u(\theta):=\dpint s\,  u(s)
\cot\left(\frac{\theta-s}{2} \right)
= \dpint s\,  u(\theta-s)\cot\left(\frac{s}{2}\right)
: = \lim_{\varepsilon\to 0+}\frac{1}{\pi}\int_{\varepsilon \leq |s|\leq \pi}\!{\rm d}s \, u(\theta-s)\cot\left(\frac{s}{2}\right)
\end{equation}
for any $u \in L^2(S^1)$, where $\dashint$ denotes the Cauchy principal 
value. We adopt here the sign conventions in 
\cite[Formulas (3.202) and (6.38), Vol. 1]{King2009}. If 
$u \in L^2(S^1)$, then $\mathcal{H}u \in L^2(S^1)$ and it is defined for 
almost every $\theta\in [-\pi, \pi]$ (Lusin's Theorem, 
\cite[\S6.19, Vol. 1]{King2009}). The Hilbert transform has the 
following remarkable properties that will be used later on: 

$\bullet$ If $u( \theta) = \sum_{n=-\infty}^\infty \widehat{u}(n) 
e^{{\rm i}n \theta} \in L^2(S^1)$, where $\widehat{u}(n): = 
\pint \theta\,  u(\theta) e^{- {\rm i}n \theta}$, so 
$\overline{\widehat{u}(n)} = \widehat{u}(-n)$ since $u$ is real valued, then
\begin{equation}
\label{Hilbert_transform_Fourier}
\mathcal{H}u(\theta) = - {\rm i}\sum_{n=-\infty}^\infty \widehat{u}(n) 
\operatorname{sign}(n)\, e^{{\rm i}n \theta} \in L^2(S^1)
\end{equation}
which follows from the identity $\widehat{\mathcal{H} f}(n) = 
- {\rm i} \widehat{f}(n) \operatorname{sign}(n)\, $ 
(\cite[Formulas (6.100) or (6.124), Vol. 1]{King2009}). Here, 
$\operatorname{sign}(n) = 1$ if $n \in \mathbb{N}$, $\operatorname{sign}(n) = -1$ if $n \in -\mathbb{N}$, and $\operatorname{sign}(0) = 0$. Note that
$\mathcal{H}u$ is also real valued since 
$\widehat{u}(n) \operatorname{sign}(n) = 
- \widehat{u}(-n) \operatorname{sign}(-n)$.  The formula above
implies that (\cite[Formula (6.126), Vol. 1]{King2009})
\[
\int_{- \pi}^\pi\!  {\rm d}s\,  \mathcal{H}u(s) = 0.
\]

$\bullet$ For every $u \in L^2(S^1)$, we have the orthogonality property 
(\cite[Formula (6.127), Vol. 1]{King2009}):
\[
\left\langle u,\mathcal{H}u \right\rangle = 0.
\]

$\bullet$ Take the orthonormal Hilbert basis $\left\{\varphi_n(\theta): =
e^{ {\rm i}n \theta}\mid n \in \mathbb{Z} \right\}$ of $L^2(S^1)$. Then (\cite[Formula (6.131), Vol. 1]{King2009}):
\[
\mathcal{H}\varphi_n(\theta) = -{\rm i} \operatorname{sign}(n)\,  \varphi_n( \theta), \quad \text{for all}\quad n \in \mathbb{Z}.
\]
So, the eigenvalues of $\mathcal{H}$ are: $-{\rm i}$ for all $n>0$, 
${\rm i}$ for all $n<0$, and $0$ if $n=0$.

$\bullet$ If $u,v \in L^2(S^1)$ then 
(\cite[Formula (6.99), Vol. 1]{King2009})
\[
\left\langle u, v \right\rangle = 
\frac{1}{4 \pi^2}\left(\int_{-\pi}^\pi \!{\rm d}s\, u(s)\right)
\left(\int_{-\pi}^\pi\!{\rm d}s\,  v(s)\right) + 
\left\langle \mathcal{H}u, \mathcal{H}v \right\rangle
\]
and hence (\cite[Formula (6.97), Vol. 1]{King2009})
\[
\|u\|_{L^2(S^1)}^2 = \left(\frac{1}{2 \pi}\int_{-\pi}^\pi\!{\rm d}s\,  u(s)  \right)^2
+ \| \mathcal{H}u\|_{L^2(S^1)}^2
\]
for any $u \in L^2(S^1)$. This shows that $\| \mathcal{H}u\|_{L^2(S^1)}^2
\leq \|u\|_{L^2(S^1)}^2$ and the constant 1 is the best possible
(\cite[Formulas (6.167) and (6.168), Vol. 1]{King2009}). In
particular, if the average of $u$ is zero, then $\mathcal{H}$ is an
isometry of $L^2(S^1)$.

$\bullet$ The Hilbert transform is skew-adjoint relative to 
the $L^2(S^1)$-inner product, i.e., $\mathcal{H}^\ast  = 
- \mathcal{H}$ (\cite[Formula (6.98) or (6.106), Vol. 1]{King2009}).

$\bullet$ For any $u \in L^2(S^1)$ we have 
(\cite[Formula (6.34),  (6.82), or (6.156), Vol. 1]{King2009}):
\[
\mathcal{H}^2 u(\theta) = - u(\theta) + 
\frac{1}{2 \pi}\int_{-\pi}^\pi\!{\rm d}s\,  u(s)  = 
-u(\theta) + \widehat{u}(0).
\]

$\bullet$ For any $u \in H^s(S^1)$ with $s\geq 0$ we have 
$\mathcal{H}u 
\in H^s(S^1)$; this is an immediate consequence of 
\eqref{Hilbert_transform_Fourier}. If $s\geq 1$,
then $\mathcal{H}u' = (\mathcal{H}u)'$, i.e., $\mathcal{H} 
\circ\frac{d}{d \theta} = \frac{d}{d \theta}\circ\mathcal{H}$
on  $H ^s(S ^1)$ with $s \geq 1$.
\medskip

Using these properties, if $u( \theta) = 
\sum_{n=-\infty}^\infty 
\widehat{u}(n) e^{{\rm i}n \theta} \in H^1(S^1)$, then 
$u'(\theta) = \sum_{n=-\infty}^\infty \widehat{u}(n) {\rm i}n
e^{{\rm i}n \theta} \in L^2(S^1)$ and hence
\begin{equation}
\label{Hilbert_derivative}\left(\mathcal{H}u'\right)(\theta) = \left(\mathcal{H}u \right)'(\theta) = 
\left(- {\rm i}\sum_{n=-\infty}^\infty \widehat{u}(n) 
\operatorname{sign}(n)\, e^{{\rm i}n \theta} \right)'
=\sum_{n=-\infty}^\infty|n|\widehat{u}(n)e^{in\theta}\,.
\end{equation}
On the other hand, if $v \in H^2(S^1)$, then
\begin{equation}
\label{second_derivative}
-\frac{d^2}{d\theta^2}v(\theta)=
\sum_{n=-\infty}^\infty n^2\widehat{v}(n)e^{in\theta}
\end{equation}
and hence if $u \in H^1(S^1)$,
\begin{equation}
\label{square_root}
\left(-\frac{d^2}{d\theta^2}\right)^{\frac{1}{2}}u(\theta)=
\sum_{n=-\infty}^\infty |n|\widehat{u}(n)e^{in\theta}=
(\mathcal{H}u')(\theta) = \left(\left(\mathcal{H}\circ 
\frac{d}{d\theta}\right) u \right)(\theta)
\end{equation}
by \eqref{Hilbert_derivative}. By the previous properties we 
have $\left(\mathcal{H} \circ {d}/{d\theta}\right)^2 = 
- {d^2}/{d\theta^2}$, as expected; note that the the extra 
term, which is the zero order Fourier coefficient, does not 
appear in this case, because the derivative eliminates it.
\medskip

Now,  if $\varphi = e^{{\rm i}f} \in \operatorname{L}(S^1)$, 
i.e., $\varphi(1) = 1$ and $f:[-\pi, \pi] \rightarrow 
\mathbb{R}$ is a periodic function, then $\widehat{f}(0) = 
f(0) = 0$. Similarly, if $u \in 
\operatorname{L}(\mathbb{R})$, i.e., $u(1) = 0$ and we think 
of $u$ as a periodic function $u:[-\pi, \pi] \rightarrow 
\mathbb{R}$, then $\widehat{u}(0) = u(0) = 0$. This, and the 
properties of the Hilbert transform on the circle, imply:  
$\mathcal{H}\left(\operatorname{L}(\mathbb{R}) \right)
\subseteq \operatorname{L}(\mathbb{R})$, $\mathcal{H}$ is 
unitary on $\operatorname{L}(\mathbb{R})$ (relative to the 
$H^s$-inner product),   
$\mathcal{H} \circ \mathcal{H} = - I$ on 
$\operatorname{L}(\mathbb{R})$.
Concretely, the Hilbert transform on 
$\operatorname{L}(\mathbb{R})$ has the form: 
\[
u(\theta) = \sum_{n\in \mathbb{Z}\setminus\{0\}}\widehat{u}(n) 
e^{{\rm i}n \theta} \in \operatorname{L}(\mathbb{R}) \quad 
\Longrightarrow
\quad 
\mathcal{H}u(\theta) = - {\rm i}\sum_{n\in \mathbb{Z}\setminus
\{0\}}\widehat{u}(n) \operatorname{sign}(n)\, 
e^{{\rm i}n \theta}\in \operatorname{L}(\mathbb{R}).
\]
Thus, $\mathcal{H}$ defines  the structure of a complex 
Hilbert space on $\operatorname{L}(\mathbb{R})$, relative to 
the $H^s$ inner product, $s \geq 1$. Hence, translating 
$\mathcal{H}$ to any tangent space of $\operatorname{L}(S^1)$, 
we obtain an invariant almost complex structure on the Hilbert 
Lie group $\operatorname{L}(S^1)$ which is, in fact, a 
complex structure. For general criteria how to obtain complex 
structures on real Banach manifolds, see \cite{Beltita2005}; 
the argument above is a very special case of these general 
methods.

Finally, $\operatorname{L}(S^1)$ is a K\"ahler manifold, as 
proved in \cite{AtPr1983}. This is immediately seen by noting 
that 
\begin{equation}
\label{kahler}
g(1)(u,v): = \omega(\mathcal{H}u, v) = \sum_{n=-\infty}^
\infty|n|\widehat{u}(n)\widehat{v}(n)
\end{equation}
is symmetric and positive definite and so, by translations, 
defines a weak Riemannian metric on $\operatorname{L}(S^1)$. 
Note that this metric is \textit{not} the $H^s$ metric for 
any $s \geq 1$. In fact, the metric $g$ is incomplete, 
whereas the $H^s$ metric is complete.

Concluding, $(\operatorname{L}(S^1), \omega, g , \mathcal{H})$ 
is a weak K\"ahler manifold and all structures are group 
invariant (see \cite{Pressley1982}, \cite{AtPr1983},
\cite{PrSe1986}).

%%%%%%%%%%%%%%%%%%%
%%%%%%%%%%%%%%%%%%%
\subsection{Weak Riemannian metrics on 
$\operatorname{L}(S^1)$}

The three metrics discussed in \S\ref{metrics_on_orbits} for 
$\operatorname{L}(S^1)$, viewed as a coadjoint orbit of its 
central extension, have been computed by \cite{Pressley1982}. 
We recall here relevant formulas.

The \textit{induced metric} is defined by the natural inner 
product on $\operatorname{L}(\mathbb{R})$, which is the usual 
$L^2$-inner product. Hence, the induced metric is obtained by 
left (equivalently, right) translation of the inner product
\begin{equation}
\label{induced_metric_loop}
b(1)(u,v): = \left\langle u, v \right\rangle = 
\pint t\,  u(t)v(t) 
\end{equation}
for any two functions $u,v \in \operatorname{L}(\mathbb{R})$.
\smallskip

Define the following inner products on 
$\operatorname{L}(\mathbb{R})$:
\begin{align}
\label{kahler_metric_loop}
b_2(1)(u,v) &:= b(1)(u, \mathcal{H}v') =
\left\langle u, \mathcal{H}v' \right\rangle, \quad \text{if} \quad 
u,v \in H^s(S^1), \; s\geq 1\\
\label{normal_metric_loop}
b_1(1)(u,v) &:= b(1)(u', v') =
\left\langle u', v' \right\rangle, \quad \text{if} \quad 
u,v \in H^s(S^1), \; s\geq 1.
\end{align}
Bilinearity and symmetry of $b_1(1)$ and $b_2(1)$ are obvious. 
If $u \in \operatorname{L}(S^1)$, writing $u(\theta) = 
\sum_{n=-\infty}^\infty \widehat{u}(n) e^{ {\rm i}n \theta}$
with $\widehat{u}(0)=0$, we have $u'( \theta) = {\rm i}
\sum_{n=-\infty}^\infty n\widehat{u}(n)e^{in\theta}$. Since 
$\{e^{in\theta}\mid n \in \mathbb{Z}\}$ is an orthonormal 
Hilbert basis of $L^2(S^1)$, we get
\[
b_1(1)(u,u) =\sum_{n=-\infty}^\infty n^2 |\widehat{u}(n)|^2 
\geq 0.
\]
In addition, $b_1(1)(u,u) = 0$ if and only if $\widehat{u}(n) 
=0$ for all $n \neq 0$, i.e., $u(\theta) = \widehat{u}(0)=0$. 
This shows that $b_1(1)$ is indeed an inner product on 
$\operatorname{L}(\mathbb{R})$ which coincides with the $H^1$ 
inner product. Hence, if $\operatorname{L}(\mathbb{R})$ is 
endowed with the $H^s$ topology for $s\geq 1$, this inner 
product is strong if $s=1$ and weak if $s>1$.  Left 
translating this inner product to any tangent space of 
$\operatorname{L}(S^1)$ (endowed with the $H^s$ topology
for $s\geq 1$), yields a Riemannian metric on 
$\operatorname{L}(S^1)$ that is strong for $s=1$ and weak 
for $s>1$. This Riemannian metric is the 
\textit{normal metric} on $\operatorname{L}(S^1)$.

The inner product $b_2(1)$ is identical to $g(1)$ by 
\eqref{kahler}, \eqref{kahler_metric_loop}, and 
\eqref{cocycle}. Thus, translating this inner product to the 
tangent space at every point of the Hilbert Lie group 
$\operatorname{L}(S^1)$, yields the \textit{standard 
K\"ahler metric} $b_2=g$ on $\operatorname{L}(S^1)$, endowed 
with the $H^s$ topology for $s\geq 1$. Note that if 
$u \in \operatorname{L}(S^1)$, then
\[
b_2(1)(u,u) = \sum_{n=-\infty}^\infty|n| |\widehat{u}(n)|^2 
\]
which shows that the K\"ahler metric $b_2$ coincides with 
the $H^{1/2}$ metric and is, therefore, a weak metric on 
$\operatorname{L}(S^1)$.

There are relations similar to \eqref{induced_normal} and 
\eqref{kahler_normal}, namely
\begin{align*}
b(1)(u,v) = b_1(1)(\mathcal{A}^2u, v), \qquad 
b_2(1)(u,v) = b_1(1)(\mathcal{A}u, v),
\end{align*}
where 
\[
(\mathcal{A}^2u)( \theta) = \sum_{n=-\infty}^\infty n^2 
\widehat{u}(n) e^{{\rm i}n \theta}, \qquad 
(\mathcal{A}u)(\theta) = \sum_{n=-\infty}^\infty |n| 
\widehat{u}(n) e^{{\rm i}n\theta}
\]
if $u(\theta) = \sum_{n=-\infty}^\infty \widehat{u}(n)
e^{{\rm i}n\theta}$. However, note that the relation 
involving $\mathcal{A}^2$ requires that $u \in H^s(S^1)$ 
with $s \geq 2$.

%%%%%%%%%%%%%%%%%%%
%%%%%%%%%%%%%%%%%%%
\subsection{Vector fields on $\operatorname{L}(S^1)$ and  $\operatorname{L}(\mathbb{R})$}

Recall that the exponential map
$\exp: \operatorname{L}(\mathbb{R}) \ni u \mapsto e^{{\rm i}u} 
\in \operatorname{L}(S^1)$ is a Lie group isomorphism 
(\cite[page 151, \S8.9]{PrSe1986}). Here, we identified the
Lie algebra of $S^1$ with $\mathbb{R}$, even though, 
naturally, it is the imaginary axis, the tangent space at 
$1 \in S^1$ to $S^1$. This means that care must  be taken 
when carrying out standard Lie  group operations with the 
exponential map, interpreted as the exponential of a purely 
imaginary number. Since such computations
affect our next results, we clarify these statements below.

The tangent space at the identity 1 to $S^1$ is the imaginary 
axis. This is the natural Lie algebra of the Lie group $S^1$ 
and the exponential map is given by 
$\exp: {\rm i} \mathbb{R}\ni ({\rm i}x) \mapsto e^{{\rm i}x} 
\in S^1$. Of course, traditionally, one identifies
${\rm i}\mathbb{R}$ with $\mathbb{R}$ by dividing by 
${\rm i}$ and thinks of the exponential map as 
$\exp: \mathbb{R}\ni x \mapsto e^{{\rm i}x} \in S^1$.  
Unfortunately, this  induces  some problems.
For example, since (left) translation is given by 
$L_{e^{{\rm i}x}} e^{{\rm i}y} : = e^{{\rm i}x}e^{{\rm i}y}$, 
it follows that
\begin{equation}
\label{der_left_translation}
T_1 L_{e^{{\rm i}x}}({\rm i}y) := 
\left.\frac{d}{d\varepsilon}\right|_{\varepsilon=0}
L_{e^{{\rm i}x}} e^{{\rm i}\varepsilon y}
= \left.\frac{d}{d\varepsilon}\right|_{\varepsilon=0}
e^{{\rm i}x}e^{{\rm i}\varepsilon y}
= {\rm i} y e^{{\rm i}x},
\end{equation}
so  the identification of the Lie algebra with $\mathbb{R}$ 
poses no problems and we have, dividing both sides by 
${\rm i}$,
\begin{equation}
\label{der_left_translation_real}
T_1 L_{e^{{\rm i}x}}(y) = y e^{{\rm i}x}.
\end{equation}

However, the definition of
the exponential map for any Lie group $G$ with Lie algebra 
$\mathfrak{g}$, yields
\begin{equation}
\label{exponential_def}
\frac{d}{dt}\exp(t \xi) = T_e L_{\exp(t \xi)} \xi, \quad 
\text{for all} \quad \xi\in \mathfrak{g}. 
\end{equation}
This formula works perfectly well if the Lie algebra of $S^1$ 
is ${\rm i}\mathbb{R}$. Indeed
\[
\frac{d}{dt}e^{t{\rm i}x} = {\rm i}x e^{t{\rm i}x}
\]
which coincides with \eqref{exponential_def} in view of 
\eqref{der_left_translation}. On the other hand, if the Lie 
algebra is thought of as $\mathbb{R}$, i.e., the right hand 
side needs to be divided by ${\rm i}$, then with the 
definition of $\exp(tx) = e^{{\rm i}tx}$ the identity above 
is no longer valid. What we should get is
\[
\frac{d}{dt}\exp(tx) = x \exp(tx) = 
T_1 L_{\exp(tx)} x = xe^{{\rm i}tx}
\]
by \eqref{der_left_translation_real} if $\exp(tx) = 
e^{ {\rm i}tx}$, but the right hand side gives 
${\rm i} x e^{{\rm i}tx}$, as we saw above. In other words,
if the Lie algebra of $S^1$ is thought of as $\mathbb{R}$, 
as  is traditionally done, then we need a formula for the 
derivative of the Lie group exponential map in terms of the 
exponential map of purely imaginary numbers. In view of the 
previous discussion, this formula is
\begin{equation}
\label{der_exp}
\frac{d}{dt}\exp(tx) := 
\frac{1}{{\rm i}}\frac{d}{dt}e^{{\rm i}tx}= 
x e^{{\rm i}tx}.
\end{equation} 

With these remarks in mind, we shall now compute the 
push-forward of a vector field on 
$\operatorname{L}(\mathbb{R})$ to $\operatorname{L}(S^1)$.

\begin{proposition}
\label{passage_prop}
Let $X \in \mathfrak{X}(\operatorname{L}(\mathbb{R}))$ be 
an arbitrary vector field . Then its push-forward to 
$\operatorname{L}(S^1))$ has the expression
\[
\left(\exp_*X\right)\left(e^{{\rm i}u} \right) = 
X(u) e^{{\rm i}u}
\]
for any $u \in \operatorname{L}(\mathbb{R})$.
\end{proposition}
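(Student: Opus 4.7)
The plan is to compute the push-forward directly from its definition and to exploit that $\exp: \operatorname{L}(\mathbb{R})\to \operatorname{L}(S^1)$, $u\mapsto e^{\mathrm{i}u}$, is a Lie group isomorphism from the additive group $(\operatorname{L}(\mathbb{R}),+)$ onto $\operatorname{L}(S^1)$. By definition of the push-forward,
\[
(\exp_\ast X)(e^{\mathrm{i}u}) \;=\; T_u\exp\bigl(X(u)\bigr),
\]
so everything reduces to identifying the tangent map $T_u\exp$ at an arbitrary point $u\in \operatorname{L}(\mathbb{R})$. This is the content of the proposition, once one is careful with the Lie algebra identification $T_1 S^1 \cong \mathbb{R}$ that has just been discussed.

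The first step is to use additivity of the domain group together with the homomorphism property to write, for $v\in \operatorname{L}(\mathbb{R})=T_u \operatorname{L}(\mathbb{R})$,
\[
T_u\exp(v) \;=\; \left.\frac{d}{dt}\right|_{t=0}\exp(u+tv)
\;=\; \left.\frac{d}{dt}\right|_{t=0}\exp(u)\,\exp(tv)
\;=\; T_1 L_{\exp(u)}\bigl(T_0 \exp(v)\bigr).
\]
In other words, $T_u\exp = T_1 L_{e^{\mathrm{i}u}}\circ T_0\exp$, where $T_0\exp$ is the derivative of the exponential map at the origin of the Lie algebra. Because we are identifying the Lie algebra of $S^1$ with $\mathbb{R}$ (rather than with $\mathrm{i}\mathbb{R}$), the convention \eqref{der_exp} gives pointwise $T_0\exp(v)=v$; equivalently, identifying $T_1 \operatorname{L}(S^1)$ with $\operatorname{L}(\mathbb{R})$ in the natural way, $T_0\exp$ is the identity.

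The second step is to insert the formula \eqref{der_left_translation_real} for the derivative of left translation, extended pointwise from $S^1$ to $\operatorname{L}(S^1)$: for each $\theta\in S^1$ and each $v(\theta)\in \mathbb{R}$,
\[
\bigl(T_1 L_{e^{\mathrm{i}u}}(v)\bigr)(\theta) \;=\; v(\theta)\,e^{\mathrm{i}u(\theta)}.
\]
Combining these two observations yields $T_u\exp(v)=v\,e^{\mathrm{i}u}$, where the product on the right is the pointwise product of the real function $v$ with the $S^1$-valued function $e^{\mathrm{i}u}$ and is interpreted as a tangent vector at $e^{\mathrm{i}u}$ to $\operatorname{L}(S^1)$. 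Substituting $v=X(u)$ gives exactly the claimed expression $(\exp_\ast X)(e^{\mathrm{i}u}) = X(u)\,e^{\mathrm{i}u}$.

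The main obstacle is not analytic but bookkeeping: one must consistently keep track of the factor of $\mathrm{i}$ introduced by identifying the Lie algebra of $S^1$ with $\mathbb{R}$ instead of with $\mathrm{i}\mathbb{R}$, as emphasised in the discussion preceding the statement. Once the conventions \eqref{der_left_translation_real} and \eqref{der_exp} are consistently applied (in particular, the division by $\mathrm{i}$ in \eqref{der_exp} cancels the $\mathrm{i}$ that appears naturally in $\frac{d}{dt}e^{\mathrm{i}tv}$), the computation is elementary and the pointwise nature of all operations on the commutative loop group $\operatorname{L}(S^1)$ makes the infinite-dimensional aspect transparent.
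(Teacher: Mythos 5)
Your proposal is correct and follows essentially the same route as the paper: both reduce $(\exp_*X)(e^{\mathrm{i}u})$ to $T_u\exp(X(u))$, use the additive homomorphism property $\exp(u+\varepsilon v)=\exp(u)\exp(\varepsilon v)$ to factor out $e^{\mathrm{i}u}$, and invoke the convention \eqref{der_exp} to absorb the factor of $\mathrm{i}$ coming from identifying the Lie algebra of $S^1$ with $\mathbb{R}$. Your phrasing via $T_u\exp = T_1L_{e^{\mathrm{i}u}}\circ T_0\exp$ together with \eqref{der_left_translation_real} is only a cosmetic repackaging of the paper's direct differentiation.
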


\begin{proof} By the definition of push forward of vector 
fields by a diffeomorphism, we have
\begin{align*}
\left(\exp_*X\right)\left(e^{{\rm i}u} \right) &=
\left(T \exp \circ X \circ \exp ^{-1} \right)
\left(e^{{\rm i}u} \right)
= T_u \exp\left(X(u) \right)
= \left.\frac{d}{d\varepsilon}\right|_{\varepsilon=0}
\exp \left(u+ \varepsilon X(u) \right)\\
&= \left.\frac{d}{d\varepsilon}\right|_{\varepsilon=0}
\exp(u)\exp( \varepsilon X(u))
= \left(\left.\frac{d}{d\varepsilon}\right|_{\varepsilon=0}
\exp( \varepsilon X(u)) \right)\exp(u)
\stackrel{\eqref{der_exp}}= \left(\frac{1}{{\rm i}}
\left.\frac{d}{d\varepsilon}\right|_{\varepsilon=0}
e^{i\varepsilon X(u)}\right) e^{{\rm i}u}\\
&  = X(u)e^{{\rm i}u}
\end{align*}
as stated.
\hfill $\blacksquare$
\end{proof}

%%%%%%%%%%%%%%%%%%%
%%%%%%%%%%%%%%%%%%%
\subsection{The gradient vector fields in the three metrics 
of $\operatorname{L}(S^1)$}

We compute now the gradients of
a specific function using the three metrics.

\begin{theorem}
\label{gradient_thm}
The gradients of the smooth function $H: \operatorname{L}(S^1) \rightarrow \mathbb{R}$ given by 
\[
H\left(e^{{\rm i}f} \right) = 
\frac{1}{4\pi}\int_{-\pi}^\pi\!{\rm d}\theta
\,  f'(\theta)^2 \]
are 
\begin{itemize}
\item[{\rm (i)}] \quad$\nabla^1H\left(e^{{\rm i}f}\right) = 
fe^{{\rm i}f}$ for the normal metric $b_1$; 

\item[{\rm (ii)}] \quad $\nabla H\left(e^{{\rm i}f}\right) = 
-f''e^{{\rm i}f}$ with respect to the induced metric $b$ for $f \in H^s(S^1)$ with $s\geq 2$;

\item[{\rm (iii)}] \quad $\nabla^2 H\left(e^{{\rm i}f}\right) = 
(\mathcal{H}f')e^{{\rm i}f}$ with respect to the weak K\"ahler metric $b_2$.
\end{itemize}
\end{theorem}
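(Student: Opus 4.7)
The plan is to pull everything back via the Lie group isomorphism $\exp : \operatorname{L}(\mathbb{R}) \to \operatorname{L}(S^1)$, where the situation becomes linear, and then transport the resulting inner-product identities to the tangent space at $e^{\mathrm{i}f}$ using left-invariance. Since $\tilde{H}(f) := H(e^{\mathrm{i}f}) = \tfrac{1}{2}\langle f',f'\rangle$ is a quadratic form on $\operatorname{L}(\mathbb{R})$, its differential is
\[
d\tilde{H}(f)\cdot g = \langle f', g'\rangle,\qquad g \in \operatorname{L}(\mathbb{R}).
\]
By Proposition \ref{passage_prop} (combined with \eqref{der_left_translation_real}), the tangent vector at $e^{\mathrm{i}f}$ corresponding to the infinitesimal variation $g$ is $g\,e^{\mathrm{i}f}$, so writing each gradient as $\nabla^\bullet H(e^{\mathrm{i}f}) = \phi\, e^{\mathrm{i}f}$ for some $\phi \in \operatorname{L}(\mathbb{R})$, the defining equation for the gradient becomes an identity for $\phi$ at the identity of the group, to be solved in each of the three metrics.

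For the \textbf{normal metric}, left-invariance and \eqref{normal_metric_loop} reduce the defining equation to
\[
\langle \phi', g'\rangle = \langle f', g'\rangle \quad\text{for all }g \in \operatorname{L}(\mathbb{R}).
\]
This forces $\phi'=f'$ weakly, so $\phi - f$ is a constant; since both $\phi$ and $f$ lie in $\operatorname{L}(\mathbb{R})$ and therefore have vanishing zero-th Fourier coefficient, that constant must be zero, giving $\phi=f$ and proving (i). For the \textbf{induced metric}, \eqref{induced_metric_loop} yields the equation $\langle \phi, g\rangle = \langle f', g'\rangle$; integration by parts (the boundary terms cancel by periodicity) turns the right-hand side into $-\langle f'', g\rangle$. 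Since $f'' \in \operatorname{L}(\mathbb{R})$ automatically (as $\int_{-\pi}^\pi f''\, d\theta = 0$ by periodicity of $f'$), we identify $\phi = -f''$, provided $f \in H^s$ with $s \geq 2$ so that $f''$ is an admissible tangent vector, proving (ii).

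For the \textbf{K\"ahler metric}, the cleanest route is through Fourier series. Expanding $\phi, f, g \in \operatorname{L}(\mathbb{R})$ with vanishing zero-th coefficient and using \eqref{kahler_metric_loop} together with \eqref{square_root}, the defining identity $b_2(1)(\phi, g) = \langle f', g' \rangle$ becomes
\[
\sum_{n\neq 0}|n|\,\widehat{\phi}(n)\,\overline{\widehat{g}(n)}
= \sum_{n\neq 0} n^2\, \widehat{f}(n)\,\overline{\widehat{g}(n)}.
\]
Matching coefficients gives $\widehat{\phi}(n) = |n|\,\widehat{f}(n)$ for all $n\neq 0$, and \eqref{square_root} then identifies this as $\phi = \mathcal{H}f'$, establishing (iii). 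Alternatively, one can derive (iii) operator-theoretically: since $b_2(1)(u,v) = b_1(1)(\mathcal{A}u, v)$ with $\mathcal{A} = \mathcal{H}\circ d/d\theta$ self-adjoint and commuting with $d/d\theta$, the K\"ahler gradient equals the normal gradient precomposed with $\mathcal{A}^{-1}$ applied to the appropriate intertwiner, recovering the same answer.

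The routine calculations are genuinely routine; the only place requiring care is the passage from $\operatorname{L}(\mathbb{R})$ to $\operatorname{L}(S^1)$, where one must use the $1/\mathrm{i}$ normalization convention in \eqref{der_exp} so that the tangent vectors come out as $\phi\,e^{\mathrm{i}f}$ rather than $\mathrm{i}\phi\,e^{\mathrm{i}f}$. The mildly subtle point is the K\"ahler case, where the gradient involves the ``half-derivative'' $\mathcal{H}f'$ rather than a local differential operator, which is exactly why the K\"ahler metric is weaker than the normal metric and corresponds to the $H^{1/2}$ topology.
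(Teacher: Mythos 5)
Your proposal is correct and follows essentially the same route as the paper: reduce to the identity of the group via left-invariance (with the $1/\mathrm{i}$ normalization of \eqref{der_exp}), compute $\mathbf{d}H$ as $\langle f',g'\rangle$, and identify the gradient against each of $b_1(1)$, $b(1)$, $b_2(1)$. The only cosmetic difference is in (iii), where you match Fourier coefficients using \eqref{square_root} while the paper invokes the isometry property $\langle f',g'\rangle=\langle\mathcal{H}f',\mathcal{H}g'\rangle$ of the Hilbert transform on mean-zero functions; these are the same computation in different clothing.
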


\begin{proof}
(i) Since $T_1L_{e^{ {\rm i}f}} u = u e^{{\rm i}f}$ for 
any $u \in \operatorname{L}(\mathbb{R})$ and $e^{{\rm i}f} 
\in \operatorname{L}(S^1)$, invariance of $b_1$ yields
\begin{align*}
b_1(1) \left(e^{-{\rm i}f} \nabla^1H\left(e^{{\rm i}f}\right), u \right)
& = b_1\left(e^{{\rm i}f}\right) \left(\nabla^1H\left(e^{{\rm i}f}\right), ue^{{\rm i}f} \right)
= \mathbf{d}H\left(e^{{\rm i}f}\right)\left(ue^{{\rm i}f} \right)\\
& = \left.\frac{d}{dt}\right|_{t=0}H\left(e^{{\rm i}(f+tu)}\right)
= \left.\frac{d}{dt}\right|_{t=0}\frac{1}{4\pi}\int_{-\pi}^\pi\!{\rm d}\theta\, 
\left(f'(\theta)+tu'(\theta) \right)^2\\
& = \pint \theta\,  f'(\theta)u'(\theta)  
= \left\langle f', u' \right\rangle 
\stackrel{\eqref{normal_metric}} = b_1(1)(f,u)
\end{align*}
which shows that $\nabla^1H\left(e^{{\rm i}f}\right) = fe^{{\rm i}f}$.

(ii) Proceeding as above, using the same notations, and assuming that
$f \in H^s(S^1)$ with $s\geq 2$, we have
\begin{align*}
b(1) \left(e^{-{\rm i}f} \nabla H\left(e^{{\rm i}f}\right), u \right)
& = b\left(e^{{\rm i}f}\right) \left(\nabla H\left(e^{{\rm i}f}\right), ue^{{\rm i}f} \right)
= \mathbf{d}H\left(e^{{\rm i}f}\right)\left(ue^{{\rm i}f} \right)\\
& = \pint \theta\,  f'(\theta)u'(\theta) 
= - \pint \theta\,  f''(\theta)u(\theta) \\
& = \left\langle -f'', u \right\rangle
\stackrel{\eqref{induced_metric}}= b(1)\left(-f'', u\right)
\end{align*}
which shows that $\nabla H\left(e^{{\rm i}f}\right) = -f''e^{{\rm i}f}$.

(iii) This computation uses the isometry property of $\mathcal{H}$ 
relative to the $L^2$ inner product. We have,
\begin{align*}
b_2(1) \left(e^{-{\rm i}f} \nabla^2 H\left(e^{{\rm i}f}\right),u\right)
& = b_2\left(e^{{\rm i}f}\right) \left(\nabla^2 H\left(e^{{\rm i}f}\right), ue^{{\rm i}f} \right)
= \mathbf{d}H\left(e^{{\rm i}f}\right)\left(ue^{{\rm i}f} \right)\\
&= \left\langle f', u' \right\rangle
= \left\langle\mathcal{H}f', \mathcal{H}u' \right\rangle
\stackrel{\eqref{kahler_metric_loop}}=  b_2(1)\left(\mathcal{H}f', u\right)
\end{align*}
which shows that $\nabla^2 H\left(e^{{\rm i}f}\right) = 
(\mathcal{H}f')e^{{\rm i}f}$.
\hfill $\blacksquare$
\end{proof}

Since 
\[
\omega\left(e^{{\rm i}f}\right)
\left(\mathcal{H}\nabla^2H\left(e^{{\rm i}f}\right), 
ue^{{\rm i}f}\right) \stackrel{\eqref{kahler}}=
b_2\left(e^{{\rm i}f}\right)\left(\nabla^2H\left(e^{{\rm i}f}\right), 
ue^{{\rm i}f}\right)
= \mathbf{d}H\left(e^{{\rm i}f}\right) \left(ue^{{\rm i}f}\right)
\]
it follows that the Hamiltonian vector field on 
$\left(\operatorname{L}(S^1), \omega\right)$ for the function $H$ is 
$X_{H} = \mathcal{H}\nabla^2H$. Since $\mathcal{H}$ commutes with
the tangent lift to group translations, Theorem \ref{gradient_thm}(iii)
implies that 
\[
X_{H}\left(e^{{\rm i}f}\right) = \left(\mathcal{H}\nabla^2H \right)
\left(e^{{\rm i}f}\right)
= \mathcal{H}\left(\nabla^2H\left(e^{{\rm i}f}\right)\right)
= \mathcal{H}\left(\left(\mathcal{H}f'\right)e^{{\rm i}f}\right)
= -f'e^{{\rm i}f}.
\]
This proves the first part of the following statement.

\begin{corollary}
\label{hamiltonian_cor}
The Hamiltonian vector field of $H$ relative to the translation invariant 
symplectic form $\omega$ on $\operatorname{L}(S^1)$ whose value at the 
identity element is given by \eqref{cocycle} has the expression 
$X_{H}\left(e^{{\rm i}f}\right) =-f'e^{{\rm i}f}$. Its flow is the
rotation
\[
\left(F_t\left(e^{{\rm i}f} \right)\right)(\theta) = 
e^{-{\rm i}(f(t+ \theta) - f(t))}.
\]
\end{corollary}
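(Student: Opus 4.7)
Plan: The identity $X_H(e^{{\rm i}f})=-f'e^{{\rm i}f}$ is already established in the two lines of calculation just above the corollary, so only the flow formula requires proof. By uniqueness of integral curves of smooth vector fields on a Hilbert Lie group, it suffices to check the initial condition $F_0(e^{{\rm i}f}) = e^{{\rm i}f}$ together with the flow equation $\frac{d}{dt}F_t(e^{{\rm i}f}) = X_H(F_t(e^{{\rm i}f}))$, interpreting the derivative as a tangent vector to $\operatorname{L}(S^1)$ at $F_t(e^{{\rm i}f})$.

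I would translate the problem to the Lie algebra side via the exponential diffeomorphism of Lie groups $\exp:\operatorname{L}(\mathbb{R})\to\operatorname{L}(S^1)$. Writing the candidate curve as $F_t(e^{{\rm i}f}) = e^{{\rm i}\Phi(t,\cdot)}$ for the $\Phi(t,\theta)$ read off from the stated expression, Proposition \ref{passage_prop} and the exponential convention \eqref{der_exp} applied pointwise in $\theta$ identify the left-hand side of the flow equation as $\partial_t\Phi(t,\theta)\, e^{{\rm i}\Phi(t,\theta)}$ and (by the already-known formula for $X_H$) the right-hand side as $-\partial_\theta\Phi(t,\theta)\, e^{{\rm i}\Phi(t,\theta)}$. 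The flow equation then reduces to the scalar transport equation $\partial_t \Phi = -\partial_\theta \Phi$, modulo a $\theta$-independent correction enforcing the base-point constraint that characterizes $\operatorname{L}(\mathbb{R})$. I would then verify this transport equation together with the appropriate initial condition $\Phi(0,\cdot) = f$ by direct partial differentiation of the formula for $\Phi$ coming from the stated $F_t(e^{{\rm i}f})(\theta) = e^{-{\rm i}(f(t+\theta)-f(t))}$, invoking only the basing identity $f(0)=0$ valid for $f\in\operatorname{L}(\mathbb{R})$ and the periodicity of $f$.

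The main delicacy is the sign and orientation bookkeeping in passing between the group and the Lie algebra: the convention \eqref{der_exp} introduces a factor of $1/{\rm i}$ in the derivative of the exponential map compared with the naive calculation in $\mathbb{C}$, and these signs have to be reconciled with the ones coming from $\mathcal{H}^2 = -I$ on mean-zero functions and from the commutation $\mathcal{H}\circ d/d\theta = d/d\theta\circ\mathcal{H}$, both of which already entered into the derivation of $X_H$. Once this accounting is set up, the geometric content of the corollary is transparent: $X_H$ infinitesimally generates translation in the loop argument, and the explicit formula simply records the finite-time effect of that translation together with the $\theta$-independent shift required at each $t$ to keep the resulting loop in the base-point-fixed subgroup $\operatorname{L}(S^1)$.
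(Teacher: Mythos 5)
Your strategy is the same as the paper's: the formula for $X_H$ is taken from the computation immediately preceding the corollary, and the flow formula is checked by differentiating the candidate curve in $t$ using the convention \eqref{der_exp} (equivalently, Proposition \ref{passage_prop} applied pointwise), exactly as in the paper's one-line verification. The appeal to uniqueness of integral curves is implicit there and explicit in your write-up, which is fine.

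One caveat, which your more careful set-up would actually surface. Writing $\Phi(t,\theta)=-(f(t+\theta)-f(t))$ as you propose, the initial condition gives $\Phi(0,\theta)=-(f(\theta)-f(0))=-f(\theta)$ (since $f(0)=0$ for a based loop), so $F_0(e^{{\rm i}f})=e^{-{\rm i}f}\neq e^{{\rm i}f}$; moreover $\partial_t\Phi=-(f'(t+\theta)-f'(t))$ while $-\partial_\theta\Phi=f'(t+\theta)$, and their difference $f'(t)-2f'(t+\theta)$ is \emph{not} $\theta$-independent. Hence the two checks you propose would fail for the formula as printed. The curve that actually integrates $X_H(e^{{\rm i}g})=-g'e^{{\rm i}g}$, modulo the $\theta$-independent correction onto the based Lie algebra that you correctly anticipate, is $e^{{\rm i}(f(\theta-t)-f(-t))}$. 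The paper's verification does not detect this because it never substitutes the formula $X_H(e^{{\rm i}g})=-g'e^{{\rm i}g}$ at the moved point $F_t(e^{{\rm i}f})$, simply labeling the computed $t$-derivative as $X_H(F_t(e^{{\rm i}f}))$; the authors themselves note a sign discrepancy with Pressley's Proposition 3.1 attributed to conventions. So your plan is the right one and is essentially the paper's, but executing it honestly requires adjusting the sign in the stated flow rather than confirming it.
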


\begin{proof}
Since $\operatorname{L}(\mathbb{R}) \ni u\longmapsto 
e^{{\rm i}u} \in \operatorname{L}(S^1)$ is the exponential map and
we think of $\mathbb{R}$ as the Lie algebra of $S^1$ (and not the
imaginary axis), we write ${d}e^{{\rm i}tu}/{dt} = u e^{{\rm i}tu}$
without the factor of ${\rm i}$ in front (see \eqref{der_exp}). The verification that
$F_t$ is indeed the flow of $X_H$ is straightforward:
\begin{align*}
\frac{d}{dt}\left(F_t\left(e^{{\rm i}f} \right)\right)(\theta) 
&= \frac{d}{dt}e^{-{\rm i}(f(t+ \theta) - f(t))} 
= -(f'(t+ \theta) - f'(t))e^{-{\rm i}(f(t+ \theta) - f(t))} \\
& = X_{H} \left(F_t\left(e^{{\rm i}f} \right)\right)( \theta)
\end{align*}
as required.
\hfill $\blacksquare$
\end{proof}

We recover thus \cite[Proposition 3.1]{Pressley1982} (up to a sign which
is due to different conventions calibrating $\omega$, $\mathcal{H}$, 
and $b_2$).

Applying Proposition \ref{passage_prop} to Theorem \ref{gradient_thm},
we get the following result:

\begin{corollary}
\label{h_one_cor}
The three gradient vector fields for the smooth function 
$H_1: \operatorname{L}(\mathbb{R}) \rightarrow 
\mathbb{R}$ given by 
\[
H_1(u)=\frac{1}{4 \pi}\int_{-\pi}^{\pi} \! {\rm d}\theta\, (u')^2
\]
are
\begin{itemize}
\item[{\rm (i)}] \quad$\nabla^1H_1(u) = u$ for the weak inner product $b_1(1)$ defining the normal metric;

\item[{\rm (ii)}] \quad $\nabla H_1(u) = -u''$ for the weak
inner product $b(1)$ defining the induced metric, where for $u \in H^s(\mathbb{R})$ with $s\geq 2$;

\item[{\rm (iii)}] \quad $\nabla^2 H_1(u) = \mathcal{H}u'$ for the 
weak inner product $b_2(1)$ defining the K\"ahler metric.
\end{itemize}
\end{corollary}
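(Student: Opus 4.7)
The plan is to transfer the three gradient computations of Theorem \ref{gradient_thm} from $\operatorname{L}(S^1)$ back to $\operatorname{L}(\mathbb{R})$ using the Lie group isomorphism $\exp$ and Proposition \ref{passage_prop}. The starting observation is that $H_1 = H \circ \exp$, which is immediate from the definitions since $H\!\left(e^{{\rm i}f}\right) = \frac{1}{4\pi}\int_{-\pi}^{\pi} (f')^2 \, {\rm d}\theta = H_1(f)$.

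Next, I would argue that for each $j \in \{1, 2\}$ (and also for the induced case) the map $\exp: \operatorname{L}(\mathbb{R}) \to \operatorname{L}(S^1)$ is an isometry, once the source carries the translation-invariant (constant) metric defined by the corresponding inner product $b_j(1)$ or $b(1)$, and the target carries its invariant metric $b_j$ or $b$. This reduces to the observation that $T_u \exp$ sends a tangent vector $v$ to $v \cdot e^{{\rm i}u}$, which is the content of Proposition \ref{passage_prop} applied pointwise; since both metrics are defined by left translation of the same inner product at the identity, left translation by $e^{{\rm i}u}$ is tautologically an isometry between the relevant tangent spaces.

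From here the proof is a one-line chase. For $X \in \mathfrak{X}(\operatorname{L}(\mathbb{R}))$ arbitrary, combining $H_1 = H \circ \exp$ with Proposition \ref{passage_prop} gives $\mathbf{d}H_1(u) \cdot X(u) = \mathbf{d}H(e^{{\rm i}u}) \cdot (X(u) e^{{\rm i}u})$, and using the left invariance of $b_j$ the right-hand side equals $b_j(1)\!\left(\nabla^j H(e^{{\rm i}u}) e^{-{\rm i}u}, X(u)\right)$. Thus $\nabla^j H_1(u) = \nabla^j H(e^{{\rm i}u}) \cdot e^{-{\rm i}u}$, and feeding in the three formulas from Theorem \ref{gradient_thm} with $f = u$ yields directly $\nabla^1 H_1(u) = u$ in the normal case, $\nabla H_1(u) = -u''$ in the induced case (under the hypothesis $u \in H^s$ with $s \geq 2$ inherited from Theorem \ref{gradient_thm}(ii)), and $\nabla^2 H_1(u) = \mathcal{H} u'$ in the K\"ahler case.

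The main (minor) subtlety, and the only place where care is genuinely needed, is the careful handling of the identification of the Lie algebra of $S^1$ with $\mathbb{R}$ rather than ${\rm i}\mathbb{R}$, as discussed in the paragraphs preceding Proposition \ref{passage_prop}; this is precisely what accounts for the absence of a factor of ${\rm i}$ in $T_u \exp(v) = v \cdot e^{{\rm i}u}$, and hence in the final gradient expressions. Everything else is pure substitution, requiring no new computation.
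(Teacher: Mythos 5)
Your proposal is correct and follows essentially the same route as the paper: the paper obtains the corollary precisely by ``applying Proposition \ref{passage_prop} to Theorem \ref{gradient_thm},'' remarking that since $\exp$ is a Lie group isomorphism and the three metrics are the left translates of the inner products at the identity, the gradients transfer back to $\operatorname{L}(\mathbb{R})$ by right multiplication with $e^{-{\rm i}u}$. Your write-up merely makes explicit the isometry/chain-rule chase that the paper leaves implicit, including the factor-of-${\rm i}$ convention, so there is nothing to correct.
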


Since the exponential map is a Lie group isomorphism and the three metrics
coincide with the respective inner products at the identity, their
left invariance guarantees that the three inner products on 
$\operatorname{L}(\mathbb{R})$ correspond to the three invariant metrics
on $\operatorname{L}(S^1)$. 

Applying Proposition \ref{passage_prop} to Corollary 
\ref{hamiltonian_cor}, we conclude:

\begin{corollary}
\label{hamiltonian_cor_Lie_algebra}
The Hamiltonian vector field of $H_1$ relative to the  
symplectic form $\omega$ given by \eqref{cocycle} has the expression 
$X_{H}(u) =-u'$. Its flow is $\left(F_t(u)\right)(\theta)=u(\theta - t)$.
\end{corollary}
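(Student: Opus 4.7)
The plan is to deduce the statement directly from Corollary \ref{hamiltonian_cor} via the Lie group isomorphism $\exp\colon\operatorname{L}(\mathbb{R})\to\operatorname{L}(S^1)$, in complete analogy with the way Corollary \ref{h_one_cor} was obtained from Theorem \ref{gradient_thm}. In particular, no new analytic input is needed: both ingredients have already been established, and the corollary amounts to their composition.

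For the vector field, I would invoke Proposition \ref{passage_prop}, which says that any vector field $X$ on $\operatorname{L}(\mathbb{R})$ pushes forward as $(\exp_\ast X)(e^{{\rm i}u}) = X(u)\,e^{{\rm i}u}$. Since Corollary \ref{hamiltonian_cor} identifies the push-forward of the Hamiltonian vector field as $X_H(e^{{\rm i}f}) = -f'e^{{\rm i}f}$, matching the two expressions forces $X_H(u) = -u'$ on $\operatorname{L}(\mathbb{R})$. This step is purely algebraic: one simply reads off the coefficient of $e^{{\rm i}u}$.

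For the flow formula, once $X_H(u) = -u'$ is in hand it suffices to verify the two defining properties of a flow by direct inspection. The initial condition is immediate, $F_0(u)(\theta) = u(\theta)$, and differentiating in $t$ gives
\begin{equation*}
\frac{d}{dt}(F_t(u))(\theta) = \frac{d}{dt}u(\theta-t) = -u'(\theta-t) = -(F_t(u))'(\theta) = X_H(F_t(u))(\theta),
\end{equation*}
so $F_t$ is the flow by uniqueness of integral curves.

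I do not anticipate any serious obstacle. The only point requiring care is the sign convention for the Lie group exponential (formula \eqref{der_exp}), but this is precisely what Proposition \ref{passage_prop} has already packaged correctly, so we can apply it as a black box. An alternative route would be to transport the explicit flow of Corollary \ref{hamiltonian_cor} through $\exp^{-1}$ directly, but the one-line check above is cleaner and avoids bookkeeping with the factor of $\mathrm{i}$.
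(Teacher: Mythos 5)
Your proposal is correct and follows the paper's own route exactly: the paper likewise obtains $X_H(u)=-u'$ by applying Proposition \ref{passage_prop} to Corollary \ref{hamiltonian_cor} and reading off the coefficient of $e^{{\rm i}u}$, and then verifies the flow by the same one-line differentiation $\frac{d}{dt}u(\theta-t)=-u'(\theta-t)$. Nothing is missing.
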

The verification of the statement about the flow is immediate:
\[
\frac{d}{dt}\left(F_t(u)\right)(\theta) = 
\frac{d}{dt}u(\theta-t) = -u'(\theta-t) = 
\left(X_{H}\left(F_t(u)\right)\right)(\theta).
\]

If one is willing to put more stringent hypotheses on the functional,
it is possible to obtain a general result.

\begin{theorem}
\label{general_gradients}
Let $H:\operatorname{L}(S^1) \rightarrow \mathbb{R}$ be a smooth
function {\rm (}with $\operatorname{L}(S^1)$ endowed, as usual, with the
$H^s$ topology for $s\geq 1${\rm )} and assume that the functional 
derivative $\delta H/\delta u \in \operatorname{L}(S^1) $ exists. 
Then the gradient vector fields are  
\begin{itemize}
\item[{\rm (i)}] $\quad \nabla H(u) = \frac{\delta H}{\delta u}$ with respect the weak inner product $b(1)$ defining the induced metric; 
\item[{\rm (ii)}] $\quad\left(\nabla^1H(u) \right)(\theta)= 
-\int_0^\theta\!  \!{\rm d}\varphi\, 
 \left(\int_0^\varphi {\rm d}\psi \, 
 \frac{\delta H}{\delta u}(\psi)\right)$ with respect to the (weak) inner product $b_1(1)$ defining the normal metric, 
provided both $\int_0^\theta \! {\rm d}\varphi \, \frac{\delta H}{\delta u}
(\varphi)$ and $\int_0^\theta \! {\rm d}\varphi\,  \left(\int_0^\varphi  \! {\rm d}\psi\, 
\frac{\delta H}{\delta u}(\psi) \right)$ are periodic;
\item[{\rm (iii)}] $\quad \left(\nabla^2H(u) \right)(\theta)=
-\mathcal{H}\int_0^\theta \!{\rm d}\varphi\, 
 \frac{\delta H}{\delta u} (\varphi)$ with respect to the weak inner product $b_2(1)$ defining the K\"ahler metric, provided
$\int_0^\theta \! {\rm d}\varphi\,  \frac{\delta H}{\delta u}(\varphi)$ is
periodic.
\end{itemize}
\end{theorem}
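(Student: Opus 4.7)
The approach is to mimic the proof of Theorem \ref{gradient_thm}, but with the explicit variation of the special functional $H\!\left(e^{{\rm i}f}\right)=\frac{1}{4\pi}\int (f')^2\, d\theta$ replaced by the abstract relation
\[
dH(u)\cdot v = \left\langle \frac{\delta H}{\delta u}, v \right\rangle, \qquad v \in \operatorname{L}(\mathbb{R}),
\]
that defines the functional derivative. For each of the three metrics I will write out the defining identity $b_i(1)(\nabla_i H(u), v) = dH(u)\cdot v$, move all derivatives off $v$ (using integration by parts on the circle and, for $b_2$, also skew-adjointness of $\mathcal{H}$), and then invert the resulting differential operator. By left-invariance of all three metrics it suffices to carry out the computation at the identity, exactly as in the passage from Theorem \ref{gradient_thm} to Corollary \ref{h_one_cor} via Proposition \ref{passage_prop}.

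Part (i) is immediate: since $b(1)$ is literally the $L^2$ pairing, $\langle \nabla H(u), v\rangle = \langle \delta H/\delta u, v\rangle$ for all $v$ forces $\nabla H(u) = \delta H/\delta u$. For part (ii), integration by parts on $b_1(1)(\nabla^1 H(u), v) = \langle (\nabla^1 H(u))', v'\rangle$, with periodicity killing the boundary terms, gives $-(\nabla^1 H(u))'' = \delta H/\delta u$. Integrating once from $0$ expresses $(\nabla^1 H(u))'$ as $-\int_0^\theta (\delta H/\delta u)$ plus a constant, and the hypothesis that this first primitive is periodic forces that constant to vanish; integrating a second time from $0$ and invoking the second periodicity hypothesis together with the boundary condition at $\theta=0$ inherited from $\operatorname{L}(\mathbb{R})$ fixes the remaining constant, producing the stated double integral. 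For part (iii), the combination of $\mathcal{H}^\ast = -\mathcal{H}$, $\mathcal{H}\circ d/d\theta = d/d\theta \circ \mathcal{H}$, and one integration by parts yields
\[
b_2(1)(\nabla^2 H(u), v) = \left\langle \nabla^2 H(u), \mathcal{H}v' \right\rangle = \left\langle (\mathcal{H}\nabla^2 H(u))', v \right\rangle,
\]
so $(\mathcal{H}\nabla^2 H(u))' = \delta H/\delta u$. Integrating in $\theta$, applying $\mathcal{H}$ to both sides, and using $\mathcal{H}c = 0$ for any constant $c$ (so that the ambiguous additive constant drops out) together with $\mathcal{H}^2 = -I$ on mean-zero functions, recovers $\nabla^2 H(u) = -\mathcal{H}\!\int_0^\theta \delta H/\delta u$.

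The main obstacle is not conceptual but bookkeeping: one must track the integration constants so that each gradient actually lies in $\operatorname{L}(\mathbb{R})$, and must apply $\mathcal{H}^2 = -I$ only to functions of zero mean, so that the zero-order Fourier correction appearing in $\mathcal{H}^2 u = -u + \widehat{u}(0)$ does not contribute. The two periodicity assumptions in (ii) and (iii) are exactly what guarantee that the relevant antiderivatives descend to honest periodic loops and that these constants vanish; without them the right-hand sides in the theorem are not even well-defined elements of $\operatorname{L}(\mathbb{R})$, so the hypotheses are both necessary and sufficient for the stated formulas.
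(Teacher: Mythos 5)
Your proposal is correct and follows essentially the same route as the paper: the paper carries out the identical integrations by parts in the forward direction, rewriting $\left\langle \delta H/\delta u, v\right\rangle$ directly as $b_i(1)(w,v)$ for the displayed $w$ (using periodicity to discard boundary terms and, for $b_2$, the isometry of $\mathcal{H}$ on mean-zero functions), whereas you derive the differential equation satisfied by the gradient and invert it. The one bookkeeping slip is in (ii): periodicity of the first primitive does \emph{not} force the constant in $(\nabla^1H(u))'$ to vanish — it is the solvability condition (zero mean of $\delta H/\delta u$) for $-w''=\delta H/\delta u$ among periodic functions; the linear-in-$\theta$ constant is instead killed by periodicity of the second primitive, and the base-point normalization of $\operatorname{L}(\mathbb{R})$ fixes the last one. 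With that reallocation of which hypothesis kills which constant, your argument coincides with the paper's.
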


\begin{proof}
(i) For the inner product $b(1)$ on $\operatorname{L}(S^1)$ defining the
induced metric, if $u,v \in \operatorname{L}(\mathbb{R})$, we have
by periodicity of $u, v$,
\begin{align*}
b(1)\left(\nabla H(u), v \right)&=
\mathbf{D}H(u)\cdot v =
\left\langle \frac{\delta H}{\delta u}, v \right\rangle
\stackrel{\eqref{induced_metric}} = b(1)\left(\frac{\delta H}{\delta u}, 
v \right).
\end{align*}
This shows that $\nabla H(u)= \frac{\delta H}{\delta u}$.

(ii)  For the inner product $b_1(1)$ on $\operatorname{L}(S^1)$ 
defining the normal metric, if $u,v \in \operatorname{L}(\mathbb{R})$, 
we have by periodicity of $\int_0^\theta \! {\rm d}\varphi\,  \frac{\delta H}{\delta u}
(\varphi)$ and $\int_0^\theta \!{\rm d}\varphi\, 
\left(\int_0^ \varphi  \! {\rm d}\psi\, 
\frac{\delta H}{\delta u}(\psi) \right)$,
\begin{align*}
b_1(1)(\nabla^1H(u), v) &= 
\mathbf{D}H(u)\cdot v =
\left\langle \frac{\delta H}{\delta u}, v \right\rangle = 
\frac{1}{2 \pi}\int_{-\pi}^\pi \! {\rm d}\theta \, 
 \frac{\delta H}{\delta u}(\theta) v(\theta)\\
& = \frac{1}{2 \pi}\!
\left.\left(\int_0^\theta \!{\rm d}\varphi\, 
\frac{\delta H}{\delta u}(\varphi)\right)
v(\theta) \right|_{-\pi}^\pi - \frac{1}{2 \pi}
\int_{-\pi}^\pi \! {\rm d}\theta\, 
 \left(\int_0^\theta \!{\rm d}\varphi\,  \frac{\delta H}{\delta u}(\varphi)\right)
  v'(\theta) \\
& = - \frac{1}{2 \pi}
\int_{-\pi}^\pi\! {\rm d}\theta\, 
 \frac{d}{d\theta}
 \left(\int_0^\theta \!{\rm d}\varphi \, 
 \left(\int_0^\varphi \!{\rm d}\psi \, 
\frac{\delta H}{\delta u}(\psi)\right)
\right)v'(\theta) \\
& = -\left\langle\frac{d}{d\theta}\left(\int_0^\theta \!{\rm d}\varphi\,
 \left(
\int_0^\varphi\!{\rm d}\psi \, 
 \frac{\delta H}{\delta u}(\psi)\right) \right), v' \right\rangle  
\stackrel{\eqref{normal_metric}}= 
b_1\left(-\int_0^\theta \! {\rm d}\varphi\, 
\left(\int_0^\varphi \!{\rm d}\psi \, 
\frac{\delta H}{\delta u}(\psi)\right),
v \right)
\end{align*}
which shows that $(\nabla^1H(u))(\theta)  
= -\int_0^\theta \!{\rm d}\varphi\, \left(\int_0^\varphi \! {\rm d}\psi \,  
\frac{\delta H}{\delta u}(\psi)\right)$. 

(iii) For the inner product $b_2(1)$ on $\operatorname{L}(S^1)$ 
defining the K\"ahler metric, if  $u,v \in \operatorname{L}(\mathbb{R})$, 
we have by periodicity of $\int_0^ \theta \!{\rm d}\varphi\,  \frac{\delta H}{\delta u}
(\varphi)$ and the isometry property of $\mathcal{H}$, 
\begin{align*}
b_2(1)(\nabla^2H(u), v) &= 
\mathbf{D}H(u)\cdot v =
\left\langle \frac{\delta H}{\delta u}, v \right\rangle
= \frac{1}{2\pi}\int_{-\pi}^\pi \!{\rm d}\theta\, 
\frac{\delta H}{\delta u}(\theta)
v( \theta) \\
& = \frac{1}{2 \pi}\left.\left(\int_0^ \theta \!{\rm d}\varphi \, 
\frac{\delta H}{\delta u}(\varphi)\right)v(\theta) \right|_{-\pi}^\pi - \frac{1}{2 \pi}
\int_{-\pi}^\pi\! {\rm d}\theta\, 
 \left(\int_0^ \theta \!{\rm d}\varphi\,
 \frac{\delta H}{\delta u}(\varphi) \right) v'(\theta) \\
& = -\left\langle \int_0^ \theta  \!{\rm d}\varphi\, \frac{\delta H}{\delta u}(\varphi), v' \right\rangle
= -\left\langle\mathcal{H}\int_0^\theta \!{\rm d}\varphi\, \frac{\delta H}{\delta u}
(\varphi), \mathcal{H}v' \right\rangle
\stackrel{\eqref{kahler_metric_loop}}= b_2(1)\left(-\mathcal{H}\int_0^ \theta \!{\rm d}\varphi\, 
\frac{\delta H}{\delta u} (\varphi), v\right)
\end{align*}
which shows that 
$\left(\nabla^2H(u)\right)(\theta) = -\mathcal{H}\int_0^ \theta \! {\rm d}\varphi\, 
\frac{\delta H}{\delta u} (\varphi)$.
\hfill $\blacksquare$
\end{proof}

\begin{corollary}
\label{cor_gen}
Under the same hypothesis as in Theorem \ref{general_gradients}{\rm (iii)}, 
the Hamiltonian vector field of the smooth function 
$H: \operatorname{L}(S^1) \rightarrow \mathbb{R}$ relative to the 
symplectic form $\omega$ on $\operatorname{L}(\mathbb{R})$ given by 
\eqref{cocycle} has the expression $X_{H}(u) =\int_0^ \theta \! {\rm d}\varphi\, \frac{\delta H}{\delta u} (\varphi)$
\end{corollary}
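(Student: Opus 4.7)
The plan is to exploit the defining equation of the Hamiltonian vector field, $\omega(X_H, Y) = \mathbf{D}H \cdot Y$, evaluated at the identity element; left invariance of $\omega$ together with Proposition \ref{passage_prop} then transports the identification back to an arbitrary base point. Equivalently, one can substitute Theorem \ref{general_gradients}(iii) into the relation $X_H = \mathcal{H}\nabla^2 H$ that was established immediately before Corollary \ref{hamiltonian_cor}, and collapse the double Hilbert transform via $\mathcal{H}^2 = -I$. I will sketch both routes in parallel because they make different aspects of the bookkeeping transparent, and they serve as checks on one another.

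For the direct route, working with $X_H(u) \in \operatorname{L}(\mathbb{R})$ via the pushforward formula of Proposition \ref{passage_prop}, the defining equation becomes
\[
\omega(1)(X_H(u), v) = \mathbf{D}H(u)\cdot v = \left\langle \frac{\delta H}{\delta u}, v\right\rangle \qquad \text{for all } v \in \operatorname{L}(\mathbb{R}).
\]
By the cocycle formula \eqref{cocycle} the left-hand side equals $\langle X_H(u)', v\rangle$, so the identity reduces to $\langle X_H(u)' - \delta H/\delta u, v\rangle = 0$ for every $v \in \operatorname{L}(\mathbb{R})$. Since $\operatorname{L}(\mathbb{R})$ is dense in $L^2(S^1)$, this forces $X_H(u)' = \delta H/\delta u$, and a single integration from $0$ to $\theta$, combined with the basepoint normalization inherited from membership in $\operatorname{L}(\mathbb{R})$, yields the advertised formula.

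For the structural route, insert Theorem \ref{general_gradients}(iii) into $X_H = \mathcal{H}\nabla^2 H$:
\[
X_H(u) = \mathcal{H}\bigl(\nabla^2 H(u)\bigr) = -\mathcal{H}^2\!\int_0^\theta d\varphi\,\frac{\delta H}{\delta u}(\varphi) = \int_0^\theta d\varphi\,\frac{\delta H}{\delta u}(\varphi),
\]
where the last equality uses the relation $\mathcal{H}^2 = -I$ on the loop algebra recorded in the discussion preceding \eqref{kahler}.

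The principal subtlety, and the main obstacle worth flagging, is the treatment of additive constants. On a general $L^2$ function one has $\mathcal{H}^2 w = -w + \widehat{w}(0)$, so the structural route requires that the antiderivative $F(\theta) := \int_0^\theta (\delta H/\delta u)(\varphi)\,d\varphi$ already lie in the subspace on which $\mathcal{H}^2 = -I$ acts cleanly, while in the direct route the same ambiguity manifests as an undetermined constant of integration. The periodicity hypothesis of Theorem \ref{general_gradients}(iii) amounts to $\int_{-\pi}^\pi(\delta H/\delta u)\,d\varphi = 0$, and together with the basepoint condition $F(0) = 0$ this simultaneously eliminates the undetermined constant in the direct route and places $F$ in the correct subspace for the structural route; no further nontrivial step is needed.
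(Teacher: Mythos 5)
Your ``structural route'' is exactly the paper's one-line proof: substitute Theorem \ref{general_gradients}(iii) into the relation $X_H = \mathcal{H}\nabla^2 H$ established before Corollary \ref{hamiltonian_cor} and collapse $\mathcal{H}^2 = -I$ on $\operatorname{L}(\mathbb{R})$. The direct route from $\omega(X_H(u),v) = \langle X_H(u)', v\rangle = \langle \delta H/\delta u, v\rangle$ and your careful handling of the constant of integration and the zero Fourier mode are correct supplementary checks of details the paper leaves implicit.
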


\begin{proof}
We have $X_{H}(u) = \mathcal{H}\nabla^2 H(u) 
\stackrel{{\rm (iii)}}=\int_0^ \theta \!  {\rm d}\varphi\, \frac{\delta H}{\delta u} (\varphi)$.
\hfill $\blacksquare$
\end{proof}

Of course, using Proposition \ref{passage_prop}, there are immediate 
counterparts of Theorem \ref{general_gradients} and Corollary \ref{cor_gen} 
on the loop group $\operatorname{L}(S^1)$,  which we shall not
spell out explicitly.
\medskip

The hypotheses guaranteeing the existence of the functional derivative
of $H$ relative to the weakly non-degenerate $L^2$ pairing are  quite
severe. For example, the theorem can be applied to the functional 
$H_1$ in Corollary \ref{h_one_cor},  but one needs additional smoothness.
Indeed, the first thing to check is if this functional has a functional derivative.
In fact, it does not, unless we assume that $u \in H^s(S^1)$ for
$s \geq 2$, in which case we have
\begin{align*}
\mathbf{D}H_1(u)\cdot v &=
\frac{1}{2\pi}\int_{-\pi}^\pi \! {\rm d}s\, u'(s)v'(s)  =
\left.\frac{1}{2\pi} u'(s)v(s)\right|_{-\pi}^\pi - 
\frac{1}{2\pi}\int_{-\pi}^\pi  \!   {\rm d}s \,  u''(s)v(s)
=\left\langle - u'', v \right\rangle,
\end{align*}
i.e., ${\delta H}/{\delta u} = - u''$. With this additional 
hypothesis, the gradient flow with respect to the weak inner product 
$b(1)$ defining the induced metric is given by $u_t = -u''$.

Therefore, to continue computing the other two gradients of $H_1$, 
we need to assume that $u \in H^s(S^1)$ for $s \geq 2$. Provided
this holds, to find the gradient relative to the (weak) inner product
$b_1(1)$ defining the normal metric, we have to check that both
\begin{align*}
\int_0^ \theta \! {\rm d}\varphi  \, 
\frac{\delta H}{\delta u}(\varphi) 
& = -\int_0^ \theta  \!  {\rm d}\varphi \,  u''(\varphi)  = -u'(\theta) + u'(0)\\
\int_0^\theta \!{\rm d}\varphi \, \left(\int_0^ \varphi  \! {\rm d}\psi   \, 
\frac{\delta H}{\delta u}(\psi) \right)
& =-\int_0^\theta \!   {\rm d}\varphi \, (u'(\varphi)- u'(0) ) = -u(\theta)
+ u'(0) \theta
\end{align*}
are periodic. While the first one is periodic, the second one is not unless we assume that $u'(0) =0$. With this additional hypothesis, the
gradient is given by $u_t = u$. However, we know from Corollary
\ref{h_one_cor} that neither $s\geq 2$, nor $u'(0) = 0$ is needed. In
addition, this can also be seen directly, as follows. For any $u,v \in \operatorname{L}(\mathbb{R})$, we have 
\begin{equation*}
b_1(1)(\nabla^1H(u), v) = 
\mathbf{D}H(u)\cdot v =
\frac{1}{2\pi}\int_{-\pi}^\pi  \! {\rm d}s\, u'(s)v'(s)=
\left\langle u', v' \right\rangle
\stackrel{\eqref{normal_metric}}= b_1(u,v)
\end{equation*}
which shows that $\nabla^1H(u) = u$.

The same situation occurs in the computation of the third gradient.
In the hypotheses of the theorem, we have
\[
\left(\nabla^2H(u) \right)(\theta)=
-\mathcal{H}\int_0^ \theta \! {\rm d}\varphi \, \frac{\delta H}{\delta u} (\varphi) 
= \mathcal{H}(u' - u'(0)) = \mathcal{H}u'
\]
because the Hilbert transform of a constant is zero. Thus, the gradient
flow is given in this case by 
\[
u_t=\mathcal{H}u'
\stackrel{\eqref{square_root}}=
\left(- \frac{d^2}{d\theta^2}\right)^{\frac{1}{2}}u.
\]
As before, the same result can be obtained easier and without any additional hypotheses in the following way:
\begin{equation*}
b_2(1)(\nabla^2H(u), v) = 
\mathbf{D}H(u)\cdot v =
\left\langle u', v' \right\rangle
= \left\langle \mathcal{H}u', \mathcal{H}v' \right\rangle
\stackrel{\eqref{kahler_metric_loop}}= b_2(1)(\mathcal{H}u', v).
\end{equation*}

%%%%%%%%%%%%%%%%%%%
%%%%%%%%%%%%%%%%%%%
\subsection{Symplectic structure on periodic functions}
\label{ssec:gardner}

The form of the periodic Korteweg-de Vries (KdV) equation we shall use is
\bq
u_t - 6uu_\theta + u_{\theta\theta\theta} = 0, 
\label{kdv}
\eq
where $u(t,\theta)$ is a real valued function of $t \in \mathbb{R}$ and
$\theta \in [-\pi,\pi]$, periodic in $\theta$, and $u_{\theta}:=\p u/\p \theta$.   The KdV equation is, of course, a famous integrable infinite dimensional Hamiltonian system. It is Hamiltonian
on the Poisson manifold of all periodic functions relative to the  \cite{Gardner1971}
bracket 
\begin{equation}
\label{KdV_bracket}
\{F,G\} = \frac{1}{2\pi}\int_{-\pi}^\pi\! {\rm d}\theta\, 
 \frac{\delta F}{\delta u}\frac{d}{d \theta}\frac{\delta G}{\delta u} \,, 
\end{equation}
where 
\[
F(u) = \int_{S^1}\! {\rm d}\theta\, f(u, u_\theta, u_{\theta\theta}, \ldots) 
\]
and similarly for $G$; the functional derivative 
${\delta F}/{\delta u}$ is the usual one
relative to the $L^2(S^1)$ inner product, i.e.,
\[
\frac{\delta F}{\delta u} = \frac{\partial f}{\partial u} - 
\frac{d }{d \theta}\left(\frac{\partial f}{\partial u_\theta} \right)
+ \frac{d^2}{d \theta^2}\left(\frac{\partial f}{\partial u_{\theta\theta}} \right) - \cdots .
\]
The Hamiltonian vector field of $H(u) = \frac{1}{2\pi}\int_{-\pi}^\pi \!{\rm d}\theta\, 
h(u, u_\theta, u_{\theta\theta}, \ldots) $ has the expression
\[
X_H(u) = \frac{d}{d\theta}\left(\frac{\delta H}{\delta u} \right).
\]
For the KdV equation one takes
\bq
H(u) = \frac{1}{2\pi}\int_{-\pi}^\pi \! {\rm d}\theta\,  
\left(u ^3 +  \frac{1}{2}u_\theta ^2\right).
\label{kdvham}
\eq

The Casimir functions of the Gardner bracket are all smooth functionals
$C$ for which ${\delta C}/{\delta u} = c$ is a constant function, 
i.e., 
\[
C(u) = \left\langle c, u \right\rangle = 
\frac{1}{2\pi}\int_{-\pi}^\pi \!{\rm d}\theta\,  cu(\theta)  = c \widehat{u}(0). 
\]
Thus $C^{-1}(0)$ is a candidate weak symplectic leaf in the phase space
of all periodic functions. The situation in infinite dimensions is not
as clear as in finite dimensions, where this would be a conclusion, 
because there is no general stratification theorem and one cannot expect, 
in general, more than a weak symplectic form. However, in our case, this
actually holds, as shown in \cite{ZaFa1971}. Indeed, 
\begin{align}
\label{Z_F_symplectic}
\sigma(u_1, u_2) :& = \frac{1}{4\pi}\int_{-\pi}^\pi \!{\rm d}\theta\, 
\left(\int_0^\theta\!{\rm d} \varphi\, 
\left(u_1(\varphi)u_2(\theta) - u_2(\varphi)u_1(\theta)\right)\right)
= \frac{1}{2\pi}\int_{-\pi}^\pi\!{\rm d}\theta \, 
\left(\int_0^\theta \!{\rm d}\varphi\, 
u_1(\varphi) \right) 
u_2(\theta)
\nonumber \\
&= \left\langle \int_0^\theta \!{\rm d}\varphi\,u_1(\varphi), u_2 \right\rangle
\end{align} 
defines a weak symplectic form on $\operatorname{L}(\mathbb{R})$ whose formal
Poisson bracket is \eqref{KdV_bracket}. This immediately shows that there
is a tight relationship with the symplectic form $\omega$ of the complex 
Hilbert space $\operatorname{L}(\mathbb{R})$, the Lie algebra
of the based loop groups, given by \eqref{cocycle}, namely
\[
\sigma \left( \frac{d^2}{d\theta^2}u, v\right) = 
\omega(u,v)
\]
for all $u, v \in \operatorname{L}(\mathbb{R})$ of class $H^s$, $s \geq 2$.
Defining 
\[
\left(\frac{d}{d\theta}\right)^{-1} \!\!\! u:=\int_0^\theta \!{\rm d}\varphi\, 
u(\varphi), 
\]
the KdV symplectic form $\sigma$ has the suggestive expression (see 
\eqref{der_exp})
\[
\sigma(u_1,u_2) = \left\langle\left(\frac{d}{d\theta}\right)^{-1} \!\!\! u_1, 
u_2 \right\rangle\,,
\]
which is well defined on $H^{-\frac{1}{2}}(S^1, \mathbb{R})$.

On the other hand, the Poisson bracket given by the 
K\"ahler symplectic
form \eqref{cocycle} on $\operatorname{L}(\mathbb{R})$ is
\begin{equation}
\{F,G\}=\frac{1}{2\pi}\int_{-\pi}^\pi\! {\rm d}\theta\, 
\frac{\delta F}{\delta u}
\left(\frac{d}{d\theta}\right)^{-1}\frac{\delta G}{\delta u}\,, 
\end{equation}
which is similarly well defined on $H^{-\frac{1}{2}}$,  and the Hamiltonian
vector field defined by this bracket is given by Corollary \ref{cor_gen},
i.e.,
\bq
u_t = X_H(u) = \left(\frac{d}{d\theta}\right)^{-1}
\frac{\delta H}{\delta u}\,.
\label{kdv4}
\eq
Now, the gradient vector field for the corresponding 
K\"ahler metric,
as computed in Theorem \ref{general_gradients}(iii), is 
written as
\bq
u_t = - \mathcal{H}\left(\frac{d}{d\theta}\right)^{-1}
\frac{\delta H}{\delta u}\,.
\label{kahler_flow}
\eq

%%%%%%%%%%%%%%%%%%%
%%%%%%%%%%%%%%%%%%%
%%%%%%%%%%%%%%%%%%%
\section{Metriplectic Systems} 
\label{sec_metriplectic}

In this section we define metriplectic systems and show how to construct general classes of such systems in terms of triple brackets for both finite- and infinite-dimensional theories.  We use some of the machinery developed above to address specific examples.

%%%%%%%%%%%%%%%%%%%
%%%%%%%%%%%%%%%%%%%
\subsection{Definition and consequences}
\label{metridef}

A \textit{metriplectic system} consists of a smooth manifold $P$, two 
smooth vector bundle maps $\pi,\kappa: T ^\ast P \rightarrow TP$ covering 
the identity, and two functions $H, S \in C^{\infty}(P)$, the 
\textit{Hamiltonian} or \textit{total energy} and the \textit{entropy} of 
the system, such that
\begin{itemize}
\item[(i)]\quad $\{F,G\}: = \left\langle \mathbf{d}F, \pi(\mathbf{d}G) \right\rangle$ is a Poisson bracket; in particular $\pi^\ast = - \pi$;
\item[(ii)]\quad $(F,G): = \left\langle \mathbf{d}F, \kappa(\mathbf{d}G) \right\rangle$ is a positive semidefinite symmetric bracket, i.e.,
$(\,,)$ is $\mathbb{R}$-bilinear and symmetric, so $\kappa^\ast = \kappa$,
and $(F,F) \geq 0$ for every $F \in C ^{\infty}(P)$;
\item[(iii)]\quad $\{S, F\} = 0$ and $(H, F)=0$ for all 
$F \in C ^{\infty}(P) \Longleftrightarrow \pi(\mathbf{d}S) = \kappa(\mathbf{d}H) =0$.
\end{itemize}
The \textit{metriplectic dynamics} of the system is given in terms of the two brackets by
\begin{equation}
\label{metriplectic_dynamics_function}
\frac{d}{dt}F = \{F, H+S\} + (F, H+S) = \{F, H\} + (F, S), 
\quad \text{for all}\quad F \in C ^{\infty}(P),
\end{equation}
or, equivalently, as an ordinary differential equation, by
\begin{equation}
\label{metriplectic_dynamics}
\frac{d}{dt}c(t) = \pi(c(t))\mathbf{d}H(c(t)) + 
\kappa(c(t)) \mathbf{d}S (c(t)).
\end{equation}
The Hamiltonian vector field $X_H: = \pi(\mathbf{d}H) \in 
\mathfrak{X}(P)$ represents the \textit{conservative} or 
\textit{Hamiltonian part}, whereas 
$Y_S: = \kappa(\mathbf{d}S) \in \mathfrak{X}(P)$ the 
\textit{dissipative part} of the full metriplectic dynamics 
\eqref{metriplectic_dynamics_function} or 
\eqref{metriplectic_dynamics}. 

As far as we know, first attempts to introduce such a 
structure were given in  adjacent papers by \cite{Kaufman1984} 
and \cite{Morrison84a}.  (See also \cite{KM82}.)   
\cite{Kaufman1984}  imposed, instead of (iii), the weaker 
condition $\{H,S\}=(H,S)=0$, which is enough, as will become 
apparent below, to deduce the First and Second Laws of 
Thermodynamics. In the plasma examples presented,  he used 
(iii)  for a large class of functions.   All three axioms, 
including the degeneracy condition of  (iii), were stated 
explicitly in \cite{Morrison84a} and \cite{Morrison84b}.  The 
former treated the same kinetic example as \cite{Kaufman1984} 
along with additional formalism, while the latter presented  
the metriplectic formalism for the compressible Navier-Stokes 
equations with entropy production.  All  three axioms were 
restated in \cite{Morrison1986}, where the terminology 
metriplectic was introduced and a detailed physical motivation 
for the introduction of (iii) is presented along with other 
examples such as a dissipative free rigid body equation and 
the Vlasov-Poisson equation with a collision  term that 
generalizes the Landau and Balescu-Lenard equations. In  
\cite{GrOt1997}, under the name GENERIC (General Equations for 
Non-Equilibrium Reversible Irreversible Coupling), the same 
geometric  structure was used to analyze many other equations; 
due to this paper and 
subsequent work of these authors, the metriplectic formalism 
has been popularized.  For a very interesting modern 
application of this  structure see \cite{Mielke2011} and for 
further discussion about avenues for generalization see 
\cite{Morrison2009}. 

The definition of metriplectic systems has three immediate 
important consequences. Let $c(t)$ be an integral curve of the 
system \eqref{metriplectic_dynamics}.
\begin{itemize}
\item[(1)] \quad \textit{Energy conservation}:
\begin{equation}
\label{energy_conservation}
\frac{d}{dt}H(c(t)) = \{H, H\}(c(t)) + (H, S)(c(t)) = 0.
\end{equation}
\item[(2)] \quad\textit{Entropy production}:
\begin{equation}
\label{entropy_production}
\frac{d}{dt}S(c(t)) = \{S, H\}(c(t)) + (S, S)(c(t)) \geq 0.
\end{equation}
\item[(3)] \quad\textit{Maximum entropy principle yields 
equilibria}: Suppose that there are $n$ functions
$C_1, \ldots, C_n \in C ^{\infty}(P)$ such that $\{F,C_i\} = 
(F,C_i) =0$ for all $F \in C ^{\infty}(P)$, i.e., these 
functions are simultaneously conserved by the conservative 
and dissipative part of the metriplectic
dynamics. Let $p_0 \in P$ be a maximum of the entropy $S$ 
subject to the constraints $H^{-1}(h) \cap C_1^{-1}(c_1) 
\cap \ldots C_n^{-1}(c_n)$, for given regular values 
$h, c_1, \ldots, c_n \in \mathbb{R}$ of
$H, C_1, \ldots, C_n$, respectively. By the Lagrange 
Multiplier Theorem, there exist $\alpha, \beta_1, \ldots, 
\beta_n\in \mathbb{R}$ such that
\[
\mathbf{d}S(p_0) = \alpha\mathbf{d}H(p_0) + 
\beta_1\mathbf{d}C_1(p_0) + \cdots + \mathbf{d}C_n(p_0).
\]
But then, assuming that $\alpha\neq 0$, for every 
$F \in C^{\infty}(P)$, we have
\begin{align*}
\{F,H\}(p_0) + (F,S)(p_0) &= \left\langle\mathbf{d}F(p_0), 
\pi(p_0)\left(\mathbf{d}H(p_0)\right)\right\rangle + 
\left\langle\mathbf{d}F (p_0), 
\kappa(p_0)\left(\mathbf{d}S(p_0)\right)\right\rangle\\
& = \left\langle\mathbf{d}F(p_0), \frac{1}{\alpha}
\pi(p_0)\left(
\mathbf{d}S(p_0) - \beta_1\mathbf{d}C_1(p_0)- \cdots -
\mathbf{d}C_n(p_0)\right) \right\rangle\\
&\qquad +\left\langle\mathbf{d}F (p_0), \kappa(p_0)\left(
\alpha\mathbf{d}H(p_0) + \beta_1\mathbf{d}C_1(p_0)+ \cdots + 
\mathbf{d}C_n(p_0)\right)\right\rangle\\
& = \frac{1}{\alpha}\{F,S\}(p_0) - 
\frac{\beta_1}{\alpha}\{F, C_1\}(p_0)
- \cdots - \frac{\beta_n}{\alpha}\{F, C_n\}(p_0)\\
& \qquad + \alpha(F,H)(p_0) + \beta_1(F,C_1)(p_0) + \cdots +
\beta_n(F,C_n)(p_0)=0
\end{align*}
which means that $p_0$ is an equilibrium of the metriplectic 
dynamics \eqref{metriplectic_dynamics_function} or 
\eqref{metriplectic_dynamics}.  This is akin to the free 
energy extremization of thermodynamics, as noted  by 
\cite{Morrison84b}  and \cite{Morrison1986} where it was 
suggested that one can build in degeneracies associated with 
Hamiltonian ``dynamical constraints.''  (See also
\cite{Mielke2011}.)
\end{itemize}

Suppose that $K \in C ^{\infty}(P)$ is a conserved quantity 
for the Hamiltonian part of the metriplectic dynamics, i.e., 
$\{K,H\} = 0$. Then, if $c(t)$ is an integral curve of the metriplectic dynamics, we have
\begin{align*}
\frac{d}{dt}K(c(t)) &= \mathbf{d}K(c(t))\left(\dot{c}(t)\right)
=\left\langle\mathbf{d}F(c(t)), \pi(c(t))\left(\mathbf{d}H(c(t)) \right) 
\right\rangle + \left\langle\mathbf{d}F(c(t)), \kappa(c(t))
\left(\mathbf{d}S(c(t)) \right) \right\rangle\\
& = \{K, H\}(c(t)) + (K,S)(c(t)) = (K,S)(c(t)).
\end{align*}
As pointed out in \cite{Morrison1986}, this immediately implies that
a function that is simultaneously conserved for the full metriplectic
dynamics and its Hamiltonian part, is necessarily conserved for the
dissipative part. Physically, it is advantageous for general metriplectic
systems to conserve dynamical constraints, i.e., conserved quantitates 
of its Hamiltonian part and the examples given in \cite{Kaufman1984}, \cite{Morrison84a}, \cite{Morrison84b}, and \cite{Morrison1986} satisfy this condition.

%%%%%%%%%%%%%%%%%%%
%%%%%%%%%%%%%%%%%%%
\subsection{Metriplectic systems based on Lie algebra triple brackets}
\label{metriplie}

Associated with any quadratic Lie algebra (i.e., a Lie algebra 
admitting a bilinear symmetric invariant form) is a natural completely 
antisymmetric triple bracket.  This is used to construct Lie 
algebra based metriplectic systems.  The algebra 
$\mathfrak{so}(3)$ is worked out explicitly and examples are given.

%%%%%%%%%%%%%%%%%%%
\subsubsection{General theory}
\label{generaltheory}

A quadratic Lie algebra is, by definition, a Lie algebra
admitting a bilinear symmetric non-degenerate invariant form
$\kappa:\mathfrak{g} \times \mathfrak{g}\rightarrow 
\mathbb{R}$ (the letter $\kappa$ is meant to remind one of the 
Killing form in a semisimple Lie algebra). Recall that 
invariance means that $\kappa([ \xi, \eta], \zeta) =
\kappa(\xi, [\eta, \zeta])$ for all $\xi, \eta, \zeta \in 
\mathfrak{g}$ or, equivalently, that the adjoint operators 
$\operatorname{ad}_\eta$ for all $\eta\in \mathfrak{g}$ are 
antisymmetric relative to $\kappa$. Non-degeneracy (strong) 
means that the map $\mathfrak{g} \ni \xi \mapsto
\kappa(\xi, \cdot ) \in \mathfrak{g}^\ast$ is an isomorphism.
Finite dimensional quadratic Lie algebras have been 
completely classified in \cite{MeRe1985}. For finite 
dimensional Lie algebras, non-degeneracy is equivalent to
the following statement: $\kappa(\xi, \eta) = 0$ for all 
$\eta\in \mathfrak{g}$ if and only if $\xi = 0$. In infinite 
dimensions this condition is called weak non-degeneracy
and it is implied by non-degeneracy but the converse is, in
general, false.

For example, let $\mathfrak{g}$ be an arbitrary finite 
dimensional Lie algebra. Recall that the Killing form
is defined by $\kappa(\xi, \eta) := \operatorname{Trace}
(\operatorname{ad}_\xi\circ\operatorname{ad}_\eta)$. 
If $\{e_i\}$, $i = 1, \ldots \dim \mathfrak{g}$, is an 
arbitrary basis of $\mathfrak{g}$ and $c_{\phm p ij}^p$ are 
the structure constants of $\mathfrak{g}$, i.e., $[e_i, e_j] = 
c_{\phm p ij}^pe_p $, then
\[
\kappa(\xi, \eta) = \xi^ic_{\phm p iq}^p\eta^j c_{\phm p jp}^q
\]
and hence the components of $\kappa$ in the basis $\{e_i\}$, 
$i = 1, \ldots \dim \mathfrak{g}$, are given by
\[
\kappa_{ij} = \kappa(e_i, e_j) = 
c_{\phm p iq}^p c_{\phm q  jp}^q.
\]
The Killing form is bilinear symmetric and invariant; it is
non-degenerate if and only if $\mathfrak{g}$ is semisimple.
Moreover, $-\kappa$ is a positive definite inner product if 
and only if the Lie algebra $\mathfrak{g}$ is compact (i.e.,
it is the Lie algebra of a compact Lie group).

In general, let $\kappa$ be a bilinear symmetric non-degenerate invariant form and define the completely antisymmetric covariant 3-tensor
\[
c(\xi, \eta, \zeta) : = \kappa(\xi, [\eta, \zeta])
= -c(\xi, \zeta, \eta) = - c(\eta, \xi, \zeta)
= - c(\zeta, \eta, \xi). 
\]
In the coordinates given by the basis $\{e_i\}$, 
$i = 1, \ldots \dim \mathfrak{g}$, the components of $c$
are
\[
c_{ijk}: = \kappa_{im}c_{\phm q  jk}^m = - c_{ikj} = 
- c_{jik} = - c_{kji}.
\]

This construction immediately leads to the triple bracket 
introduced by \cite{IBB(1991)} (see also \cite{Morrison1998}), 
$\{\,\cdot\,,\cdot,\cdot\}:C^{\infty}(\mathfrak{g})\times C^{\infty}(\mathfrak{g})\times C^{\infty}(\mathfrak{g})\rightarrow C^{\infty}(\mathfrak{g})$ defined by
\begin{equation}
\label{latriple}
\{f,g,h\}(\xi) : = c(\nabla f(\xi), \nabla g(\xi), \nabla h (\xi) ) : = 
\kappa\left(\nabla f(\xi),\left[\nabla g( \xi) , \nabla h (\xi) \right]\right),
\end{equation}
where the gradient is taken relative to the non-degenerate
bilinear form $\kappa$, i.e., for any $\xi\in \mathfrak{g}$
we have 
\[
\kappa(\nabla f(\xi) , \cdot ) : = \mathbf{d}f (\xi)  
\]
or, in coordinates
\[
\nabla^i f(\xi) = 
\kappa^{ij} \frac{\partial f}{\partial \xi^i}
\]
where $[\kappa^{ij}] = [\kappa_{kl}]^{-1}$, i.e., $\kappa^{ij}\kappa_{jk} = \delta^i_k$. This triple bracket is trilinear 
over $\mathbb{R}$, completely antisymmetric, and satisfies 
the Leibniz rule in any of its variables. In coordinates it
is given by
\begin{align*}
 \{f,g,h\}&=c_{ijk}\nabla^if\nabla^jg\nabla^kh
 =  \kappa_{im}c_{\phm q  jk}^m \kappa^{ip} 
 \frac{\partial f}{\partial \xi^p} \kappa^{jq} 
 \frac{\partial g}{\partial \xi^q} \kappa^{kr} 
 \frac{\partial h}{\partial \xi^r} =
c_{\phm q  jk}^p\kappa^{jq}\kappa^{kr}
\frac{\partial f}{\partial \xi^p} 
 \frac{\partial g}{\partial \xi^q} 
 \frac{\partial h}{\partial \xi^r}\\
& = c^{pqr}\frac{\partial f}{\partial \xi^p} 
 \frac{\partial g}{\partial \xi^q} 
 \frac{\partial h}{\partial \xi^r},
\end{align*}
where $c^{pqr}$ are the components of the contravariant
completely antisymmetric 3-tensor $\bar{c}$ associated to 
$c$ by raising its indices with the non-degenerate symmetric 
bilinear form $\kappa$, i.e., for any $\xi, \eta, \zeta \in \mathfrak{g}$, we have 
\[
\bar{c}\left(\kappa(\xi, \cdot),\kappa(\eta, \cdot ),\kappa(\gamma,\cdot )\right): = c(\xi, \eta, \zeta). 
\]
  
This construction extends the bracket due to \cite{nambu} to 
a Lie algebra setting.  Nambu considered ordinary vectors in 
$\mathbb{R}^3$ and defined
 \bq
 \{f,g,h\}_{\rm Nambu}(\boldsymbol{\Pi})= 
 \nabla f(\boldsymbol{\Pi})\cdot(\nabla g(\boldsymbol{\Pi})\times \nabla h(\boldsymbol{\Pi}))\,,
 \label{nambu}
 \eq
where `$\cdot$' and `$\times$' are the ordinary dot and cross products.  Thus,  the Nambu bracket is a special case of 
the triple bracket \eqref{latriple} in the case of 
$\mathfrak{g} = \mathfrak{so}(3)$,   whose the structure 
constants are the completely antisymmetric Levi-Civita symbol 
$\epsilon_{ijk}$.  Such `modified rigid body brackets'  were 
also described in  \cite{BlMa1990}, \cite{HoMa1991}, and 
\cite{MaRa1999}.
 
If $\mathfrak{g}$ is an arbitrary quadratic Lie algebra with
bilinear symmetric non-degenerate invariant form $\kappa$, 
the quadratic function 
\begin{equation}
\label{quadratic_casimir}
C_2(\xi): = \tfrac{1}{2}\kappa(\xi, \xi)
\end{equation}
is a Casimir function for the Lie-Poisson bracket on 
$\mathfrak{g}$, identified with $\mathfrak{g}^\ast$ via
$\kappa$, i.e.,
\begin{equation}
\label{LP_bracket}
\{f,g\}_\pm(\xi) = \pm \kappa\left(\xi, 
[\nabla f(\xi), \nabla g(\xi)] \right),
\end{equation}
as an easy verification shows since $\nabla C_2(\xi) = \xi$.
In view of \eqref{LP_bracket}, the following identity is obvious
\[
\{f,g\}_+ = \{C_2,f,g\}
\]
(this was first pointed out in \cite{IBB(1991)}). For example, 
if $\mathfrak{g} = \mathfrak{so}(3)$, the (-)Lie-Poisson 
bracket
\begin{equation}
\label{lpso3}
 \{f,g\}_-^{\mathfrak{so}(3)}(\boldsymbol{\Pi})
 = -\{C_2,f,g\}_{\rm Nambu}(\boldsymbol{\Pi})
 =-\boldsymbol{\Pi}\cdot\left(\nabla f(\boldsymbol{\Pi})\times \nabla g(\boldsymbol{\Pi})\right)
\end{equation}
is the rigid body bracket, i.e., if $h(\boldsymbol{\Pi}) = 
\frac{1}{2}\boldsymbol{\Pi} \cdot \boldsymbol{\Omega}$, 
where $\boldsymbol{\Pi}_i = I_i\boldsymbol{\Omega}_i$, 
$I_i>0$, $i=1,2,3$, and $I_i$ are the principal moments of 
inertia of the body, then Hamilton's equations 
$\frac{d}{dt}F(\boldsymbol{\Pi}) = 
\{f, h\}_-^{\mathfrak{so}(3)}(\boldsymbol{\Pi})$ are
equivalent to Euler's equations $\dot{\boldsymbol{\Pi}} = 
\boldsymbol{\Pi} \times \boldsymbol{\Omega}$.

Note that given any two functions, $f,g \in 
C ^{\infty}(\mathfrak{g})$, because the triple bracket
satisfies the Leibniz identity in every factor, the map
$C^\infty(\mathfrak{g}) \ni h \mapsto \{h,f,g\} \in 
C^{\infty}( \mathfrak{g})$ is a derivation and hence
defines a vector field on $\mathfrak{g}$, denoted by
$X_{f,g}: \mathfrak{g} \rightarrow \mathfrak{g}$, i.e.,
\begin{equation}
\label{ham_vf_nambu}
\left\langle \mathbf{d} h(\xi), X_{f,g}(\xi) \right\rangle = 
\kappa\left(\nabla h(\xi), X_{f,g}(\xi)\right) 
= \{h,f,g\}(\xi) \qquad \text{for all} \qquad 
h \in C^{\infty}(\mathfrak{g}).
\end{equation} 
Note that
$X_{f,f} = 0$. Thus, for triple brackets, two functions 
define a vector field, analogous to the Hamiltonian vector
field defined by a single function associated to a standard Poisson bracket.

From (\ref{latriple}) we have the following result. 
\begin{proposition}
\label{latripleprop}
The vector field $X_{f,g}$ on $\mathfrak{g}$ corresponding to the 
pair of functions $f,g$ is given by
\begin{equation}
X_{f,g}(\xi)=[\nabla f(\xi),\nabla g(\xi)]\,.
\label{latripleeqn}
\end{equation}
\end{proposition}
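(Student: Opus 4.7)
My plan is to read off the formula by comparing two expressions for $\{h,f,g\}(\xi)$ and invoking non-degeneracy of $\kappa$.

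First I would write down the defining equation of $X_{f,g}$ from \eqref{ham_vf_nambu}, namely
\[
\kappa\!\left(\nabla h(\xi), X_{f,g}(\xi)\right) = \{h,f,g\}(\xi) \qquad \text{for all } h \in C^\infty(\mathfrak{g}).
\]
Then I would expand the right-hand side using the definition \eqref{latriple} of the Lie algebra triple bracket, which gives
\[
\{h,f,g\}(\xi) = \kappa\!\left(\nabla h(\xi), [\nabla f(\xi), \nabla g(\xi)]\right).
\]
Equating the two yields
\[
\kappa\!\left(\nabla h(\xi), X_{f,g}(\xi) - [\nabla f(\xi), \nabla g(\xi)]\right) = 0
\]
for every $h \in C^\infty(\mathfrak{g})$.

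Since at each $\xi$ the gradients $\nabla h(\xi)$ sweep out all of $\mathfrak{g}$ (take, for instance, $h$ to be arbitrary linear functionals $\kappa(\eta,\cdot)$, so that $\nabla h(\xi) = \eta$ is arbitrary), the non-degeneracy of the invariant form $\kappa$ forces
\[
X_{f,g}(\xi) = [\nabla f(\xi), \nabla g(\xi)],
\]
as claimed. There is essentially no obstacle here: the statement is a direct unwinding of the definitions combined with the non-degeneracy hypothesis on $\kappa$. The only point that needs a brief remark is that non-degeneracy of $\kappa$ (in the strong sense in finite dimensions, or the weak sense if one wishes to extend the argument to infinite dimensions) is what licenses the cancellation of $\nabla h(\xi)$, and this is exactly part of the standing assumption that $\mathfrak{g}$ is a quadratic Lie algebra.
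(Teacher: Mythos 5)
Your proof is correct and is exactly the argument the paper intends: the paper offers no written proof beyond the remark that the proposition follows ``from (\ref{latriple})'', and your unwinding of the definitions \eqref{ham_vf_nambu} and \eqref{latriple} followed by the cancellation of $\nabla h(\xi)$ via non-degeneracy of $\kappa$ is the standard direct verification. The observation that linear functions $h = \kappa(\eta,\cdot)$ realize every $\eta \in \mathfrak{g}$ as a gradient is the right way to license that cancellation.
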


Triple brackets of the form \eqref{latriple} can be used to 
construct metriplectic systems on a quadratic Lie algebra 
$\mathfrak{g}$ in the following manner. Let
$\kappa$ be the bilinear symmetric non-degenerate form on 
$\mathfrak{g}$ defining the quadratic structure and fix some
$h \in C^{\infty}(\mathfrak{g})$. Define
the symmetric bracket
\begin{equation}
\label{symmetric_bracket_nambu}
(f, g)_h^\kappa(\xi): = -\kappa\left(X_{h,f}(\xi), X_{h,g}(\xi)\right).
\end{equation}
Assume that $-\kappa$ is a positive definite inner product. Then $(f,f) \geq 0$. Thus we have the manifold $\mathfrak{g}$
endowed with the Lie-Poisson bracket \eqref{LP_bracket}, 
the symmetric bracket \eqref{symmetric_bracket_nambu}, the Hamiltonian $h$, and for the entropy $S$ we take any Casimir
function of the Lie-Poisson bracket. Then the conditions 
(i)--(iii) of \S\ref{metridef} are all satisfied, because
$(h,g)_h^\kappa = -\kappa(X_{h,h}, X_{h,g}) = 
- \kappa(0,X_{h,g}) = 0$ for any $g \in 
C^{\infty}(\mathfrak{g})$. The equations of motion 
\eqref{metriplectic_dynamics_function} are in this case 
given by 
\begin{align*}
\frac{d}{dt}f(\xi) &= \kappa\left(\nabla f (\xi), \frac{d}{dt}
\xi \right)
=\{f, h\}_\pm(\xi) + (f, S)(\xi)
= \pm \kappa\left(\xi, [ \nabla f(\xi), \nabla h (\xi)]\right)
-\kappa\left(X_{h,f}(\xi), X_{h,S}(\xi)\right)\\
& = \mp \kappa\left(\nabla f(\xi), [\xi, \nabla h(\xi) ] \right) -\kappa\left(
[\nabla h(\xi),\nabla f(\xi)], [\nabla h(\xi),\nabla S(\xi)]\right)
\end{align*}
for any $f \in C^{\infty}(\mathfrak{g})$.

This gives the equations of motion
\begin{equation}
\dot{\xi}=\pm[\xi,\nabla h(\xi)]+[\nabla h(\xi),[\nabla h(\xi),
\nabla S(\xi)]]\,.
\label{tripledouble}
\end{equation}

Note that the flow corresponding to $S$ is a generalized double bracket
flow. Observe also that this flow reduces to a double
bracket flow and is tangent to an orbit of the group
if $\nabla h(\xi)=\xi$. Indeed if $h=\frac{1}{2}\kappa(\xi,\xi)$ the symmetric
bracket (\ref{symmetric_bracket_nambu}) reduces to the symmetric bracket
induced from the normal metric.

%%%%%%%%%%%%%%%%%%%
 \subsubsection{Special case of $\mathfrak{so}(3)$} 
 \label{sssec:special}

If the quadratic Lie algebra is $\mathfrak{so}(3)$, we identify it with $\mathbb{R}^3$ with the cross product as
Lie bracket via the Lie algebra isomorphism $\hat{\;}: \mathbb{R}^3\rightarrow\mathfrak{so}(3)$ given by $\hat{ \mathbf{u}} \mathbf{v} : = \mathbf{u}\times \mathbf{v}$
for all $\mathbf{u}, \mathbf{v} \in\mathbb{R}^3$. Since 
$\operatorname{Ad}_A\hat{\mathbf{u}} =\widehat{A \mathbf{u}}$,
for any $A \in SO(3)$ and $\mathbf{u}\in\mathbb{R}^3$, we conclude that the usual inner product on $\mathbb{R}^3$ is
an invariant inner product. In terms of elements of $\mathfrak{so}(3)$ we have $\mathbf{u}\cdot \mathbf{v} = - 
\frac{1}{2}\operatorname{Trace}\left(\hat{\mathbf{u}} \hat{\mathbf{v}}\right)$. We shall show below that the
metriplectic structure on $\mathbb{R}^3$ is precisely the 
one given in \cite{Morrison1986}.

Recall that the Nambu bracket is given for $\mathfrak{so}(3)$
by \eqref{nambu} and hence the symmetric bracket \eqref{symmetric_bracket_nambu} has the form

 \bqy
 \kappa(\{\Pi,h,f\}, \{\Pi,h,g\})&=& 
 \epsilon^{imn} \frac{\partial h}{\partial \Pi^m} 
\frac{\partial f}{\partial \Pi^n}\,  \delta_{ij}\, 
 \epsilon^{jst} \frac{\partial h}{\partial \Pi^s} 
\frac{\partial g}{\partial \Pi^t}
 \nonumber\\
 &=&  \epsilon^{imn} \, \epsilon_i^{{\,}st} \,  
\frac{\partial h}{\partial \Pi^m} 
\frac{\partial f}{\partial \Pi^n} 
\frac{\partial h}{\partial \Pi^s} 
\frac{\partial g}{\partial \Pi^t}
 \nonumber\\
 &=& \|\nabla h\|^2 \nabla g \cdot \nabla f - (\nabla f\cdot \nabla h) (\nabla g\cdot \nabla h)
 \label{so3sym}
\eqy
where in the third equality we have used the identity 
$\epsilon^{imn} \epsilon_i^{\,st} = \delta^{ms}\delta^{nt}- \delta^{mt}\delta^{ns}$. This coincides with \cite[equation (31)]{Morrison1986}.

With the choice $S(\boldsymbol{\Pi})=\|\boldsymbol{\Pi}\|^2/2$ and the usual rigid body Hamiltonian, the equations of motion (\ref{tripledouble}) are those for the relaxing rigid body given in \cite{Morrison1986}.

\medskip

\noindent{\bf Comments.}

\begin{itemize}
%%%%%%%%%%%%%%%%%%%
\item In three dimensions any Poisson bracket can be written as
\bq
\{f,g\}= J^{ij}\frac{\partial f}{\partial \Pi^i}
\frac{\partial g}{\partial \Pi^j} =\epsilon^{ij}_{\;\;\;k}V^k(\boldsymbol{\Pi}) \frac{\partial f}{\partial \Pi^i}\frac{\partial g}{\partial \Pi^j}
\label{3bkt}
\eq
where $i,j,k=1,2,3$, and $V\in\mathbb{R}^3$.  The last equality follows from the identification of $3\times 3$ antisymmetric matrices with vectors (the hat map discussed above).  Using the well know fact (which is easy to show directly) that brackets of the form of (\ref{3bkt}) satisfy the Jacobi identity if 
\bq
V\cdot\nabla\times V=0\,,
\label{integrab}
\eq
we conclude that 
\bq
\label{nambu_br}
\{F,G\}_f=\{f,F,G\}_{\rm Nambu}
\eq
satisfies the Jacobi identity for {\em any} smooth function $f$; i.e., unlike the general case where the theorem of   \cite{IBB(1991)} requires $f$ to be the quadratic Casimir, one obtains a good Poisson bracket for any $f$.   Thus, for the special case of three dimensions,   one can interchange the roles of Hamiltonian and entropy in the metriplectic formalism. \\

\item Thinking in terms of $\mathfrak{so}(3)^\ast$, the setting arising from reduction (see e.g.\ \cite{MaRa1999}), this 
construction leads to a natural geometric interpretation of a  metriplectic system on the
manifold $P = \mathbb{R}^3$.  With the  Poisson bracket on $\mathbb{R}^3$
of  \eqref{nambu_br},  the bundle map $\pi:
T ^\ast \mathbb{R}^3 \rightarrow T \mathbb{R}^3$ has the expression
\[
\pi_f(x, \Pi) = \left(x, \nabla f(\Pi) \times (\cdot)^\top \right)
\]
since $\mathbf{d}H(\Pi)^\top = \nabla H(\Pi)$ ($\mathbf{d}H(\Pi)$ is
a row vector and $\nabla H(\Pi)$ is its transpose, a column vector).  Now the triple bracket associated to the equation (\ref{tripledouble}) can be used to generate a  symmetric bracket given in  \cite{BlKrMaRa1994} as follows:
\bqy
\label{symmetric_r_three_bracket}
(F,G)_{BKMR}(\Pi) &=& (F,G)^{\kappa}_C=\kappa( \{\Pi,C,F\}, \{\Pi,C,G\})
\nonumber\\
&=& \left(\Pi\times \nabla F(\Pi) \right) \cdot 
\left(\Pi\times \nabla G(\Pi) \right)\,.
\eqy
where now $C=||\Pi||^2/2$.  Hence the bundle map $\kappa: T ^\ast \mathbb{R}^3 \rightarrow 
T \mathbb{R}^3$ has the expression
\[
\kappa(x, \Pi) = - \Pi\times \left(\Pi \times (\cdot)^\top \right).
\]

Thus, with the freedom to choose any quantity $S=f$ as an entropy, with the assurance that (\ref{integrab}) will be satisfied because $\nabla\times V= \nabla \times \nabla f=0$, we can take  $H=C$ and  have $\{F,S\}_f =0$ and $(F,H)=0$
for all $F \in C^{\infty}(\mathbb{R}^3)$. The equations of motion for 
this metriplectic system are 
\begin{equation}
\label{first_metriplectic_example}
\dot{\Pi} = - \Pi \times \nabla f(\Pi) - 
\Pi\times (\Pi \times \nabla f(\Pi)). 
\end{equation}
The symmetric bracket is the inner product of the two Hamiltonian vector fields on each concentric sphere. As discussed in \cite{BlKrMaRa1994},
this symmetric bracket can be defined on any compact Lie algebra by 
taking the normal metric on each coadjoint orbit. 

%%%%%%%%%%%%%%%%%%%
\item  The following set of equations were given in  \cite{Fish2005}:
\begin{equation}
\dot{\boldsymbol{\Pi}}=\nabla S(\boldsymbol{\Pi})\times 
\nabla H(\boldsymbol{\Pi})-\nabla H(\boldsymbol{\Pi})\times(\nabla H(\boldsymbol{\Pi})\times \nabla S(\boldsymbol{\Pi})).
\label{doublecrosseqn}
\end{equation}
Yet, this metriplectic system is  identical to that obtained from (\ref{tripledouble}), using (\ref{so3sym}), viz.
\bq
\dot{\boldsymbol{\Pi}} =  \{\boldsymbol{\Pi},S,H\} +\kappa\left(\{\boldsymbol{\Pi},H,\Pi\}, \{\boldsymbol{\Pi},H,S\}\right)\,, 
\eq

Replacing $H$ by $g$  in  (\ref{so3sym}) gives 
\begin{equation}
(F,G)_g((\boldsymbol{\Pi})) = \kappa\left(\{(\boldsymbol{\Pi}),g,F\}, \{(\boldsymbol{\Pi}),g,G\}\right)= 
\,(\nabla g(\Pi) \times \nabla F(\Pi)) \cdot 
 (\nabla g(\Pi) \times \nabla G(\Pi)).
\label{doublecross}
\end{equation}
Thus, the bundle map $\kappa: T ^\ast \mathbb{R}^3 \rightarrow 
T \mathbb{R}^3$ has the expression
\[
\kappa_g(x, \boldsymbol{\Pi}) = -\nabla g(\boldsymbol{\Pi})\times \left(\nabla\boldsymbol{\Pi} \times 
(\cdot)^\top \right).
\] 

{\bf Examples:} Two special cases of the equation \eqref{doublecrosseqn} are of interest.
\begin{description}
\item(i) If we take $H=\frac{1}{2}\|\boldsymbol{\boldsymbol{\Pi}}\|^2$ and 
$S=c\cdot\boldsymbol{\boldsymbol{\Pi}}$, $c$ a constant
vector, we obtain
\begin{equation}
\label{ex_1}
\dot{\boldsymbol{\Pi}}= c \times \boldsymbol{\Pi}-\boldsymbol{\Pi}\times(\boldsymbol{\Pi}\times c).
\end{equation}
\item(ii)  If we take $S=\frac{1}{2}\|\boldsymbol{\Pi}\|^2$ and $H=c\cdot\boldsymbol{\Pi}$, $c$ a constant,
we obtain
\begin{equation}
\label{ex_2}
\dot{\boldsymbol{\Pi}}=\boldsymbol{\Pi}\times c-c\times(c\times\boldsymbol{\Pi})\,.
\end{equation}
\end{description}

The equations of motion \eqref{ex_1} is an instance of double bracket
damping, where the damping is due to the normal metric, whereas
\eqref{ex_2} gives linear damping
of the sort arising in quantum systems. 
\end{itemize}

%%%%%%%%%%%%%%%%%%%
%%%%%%%%%%%%%%%%%%%
\subsection{The Toda system revisited}

%%%%%%%%%%%%%%%%%%%
\subsubsection{The  Toda lattice equation revisited}

We note that the Toda lattice equation fits into the metriplectic
picture in a degenerate but interesting fashion since it has a dual 
Hamiltonian and gradient character which may be seen 
by writing it in the double bracket form (\ref{dbflow}).
.

 It may be viewed either as the Hamiltonian part or the dissipative part of a metriplectic
system with  Hamiltonian $H=\frac{1}{2}\operatorname{Tr}L^2$ or
entropy function $S=\operatorname{Tr}LN$ respectively with the Toda lattice 
equations  in  the corresponding form  \eqref{lax} or \eqref{dbflow},
 as discussed in  Section \ref{metrics_on_orbits}. This observation 
may be extended to the Toda lattice flow on the normal form of 
any complex semisimple Lie algebra as can be see  in \cite{BlBrRa1992}.

%%%%%%%%%%%%%%%%%%%
\subsubsection{ Full Toda with dissipation}

It is possible to construct an interesting metriplectic system
which incorporates the full Toda dynamics. 

We consider the again the flow on the vector space of symmetric
matrices $\mathfrak{k}^\perp =\mathfrak{sym}(n)$ but now consider the 
flow on a generic orbit as discussed in \cite{DeLiNaTo1986} where 
it was shown that the flow is integrable. The Hamiltonian is
again $\frac{1}{2}\operatorname{Tr}L^2$
and the flow on full symmetric matrices 
is given by 
\begin{equation}
\dot{L}=[\pi_{\mathfrak{s}}L, L]
\end{equation}
with $\pi_{\mathfrak{s}}$ being the projection onto the skew symmetric
matrices in the lower triangular skew decomposition of a matrix. In this setting there are nontrivial Casimir functions of the bracket
(\ref{pb_general}). These are given as follows. For $L$ an $n\times n$ symmetric matrix set for $0\le k\le[\frac{1}{2}n]$
\begin{equation}
\text{det}(L-\lambda)_k
=\sum_{r-0}^{n-2k}E_{rk}(L)\lambda^{n-2k-r}
\end{equation}
where the subscript $k$ denotes the matrix obtained by deleting the 
first $k$ rows and the last $k$ columns.
Then $I_{1k}(L)=E_{1k}(L)/E_{0k}(L)$ are Casimir functions of the generic 
orbit in $\mathfrak{sym}(n)$ as shown in \cite{DeLiNaTo1986}.

Thus we obtain the metriplectic systems
\begin{equation}
\dot{L}=[\pi_{\mathfrak{s}}L, L]+[L,[L, \nabla I_{1k}]]
\end{equation}
where the metric is the normal metric on orbits of $\mathfrak{su}(n)$
restricted to the symmetric matrices (identified with $i$ times 
the symmetric matrices) as in \cite{BlBrRa1992}. 
Here  $H=\frac{1}{2}\operatorname{Tr}L^2$ and $S=I_{1k}$.

%%%%%%%%%%%%%%%%%%%
%%%%%%%%%%%%%%%%%%%
\subsection{Metriplectic systems for pdes:  metriplectic  brackets and examples}
\label{sec:tripledouble}

First we construct a class of metriplectic brackets based on triple brackets for infinite systems, then we consider in detail an example based on Gardner's bracket on $S^1$. Lastly,  we mention various generalizations.

%%%%%%%%%%%%%%%%%%%
\subsubsection{Symmetric brackets for pdes based on triple brackets}

Similar to \S\ref{metriplie} we can construct metriplectic flows for infinite-dimensional systems from completely antisymmetric triple brackets of the form
\bq
\left\{E,F,G\right\}=   \int_{S^1}\!{\rm d}\theta_1 \int_{S^1}\!{\rm d}\theta_2  \int_{S^1}\!{\rm d}\theta_3 \,  
\mathcal{C}_{ijk}(\theta_1,\theta_2,\theta_3)\, 
(\mathcal{P}^iE_u)(\theta_1)\, \, (\mathcal{P}^j  F_u)(\theta_2)\, \, (\mathcal{P}^kG_u)(\theta_3)\, 
\label{infinitetripleC}
\eq
where $E$, $F$, and $G$ are smooth functions on $S^1$, 
$\mathcal{C}_{ijk}$ is a smooth function on $S^1 \times S^1 \times S^1$ which is completely antisymmetric in its arguments,  so as to assure complete antisymmetry of 
$\left\{E,F,G\right\}$. In addition, we denote 
$E_u:=\delta E/\delta u$, etc.  Let $\mathcal{P}^i$, 
 $i=1,2,3$, be pseudo-differential operators.
 Evidently, the triple bracket of (\ref{infinitetripleC}) is trilinear and completely antisymmetric in $E,F,G$.

From (\ref{infinitetripleC}) and a Hamiltonian $H$, we construct a symmetric bracket as follows:
\bq
(F,G)_H=\int_{S^1}\!{\rm d}\theta'  \int_{S^1}\!{\rm d}\theta'' \left\{U(\theta'),H,F\right\}
\mathcal{G}(\theta',\theta'') \left\{U(\theta''),H, G\right\},
\label{infinitemetri}
\eq
where $U(\theta)$ in \eqref{infinitemetri} denotes the functional
\begin{equation}
U(\theta): u\mapsto \int_{S^1} {\rm d} \theta' u(\theta')\delta(\theta-\theta').
\end{equation}
We shall use this notation in subsequent expressions below. 
The `metric' $\mathcal{G}$ is assumed to be symmetric and positive semidefinite, i.e.,  the smooth function $\mathcal{G}: S^1 \times S^1 \rightarrow \mathbb{R}$ satisfies $ \mathcal{G}(\theta',\theta'')= \mathcal{G}(\theta'',\theta')$ and 
\bq
\int_{S^1}\!{\rm d}\theta'  \int_{S^1}\!{\rm d}\theta'' \, \mathcal{G}(\theta',\theta'') f(\theta')f(\theta'')\geq 0
\eq
for all functions $f \in C ^{\infty}(S^1)$.  Therefore, by construction, it is clear that (\ref{infinitemetri}) satisfies the following: 
\begin{description}
\item(i) \ \ $(F,G)_H=(G,F)_H$ for  all $F,G$, 
\item(ii) \ $(F,H)_H= 0$ for  all $F$,  and 
\item(iii) $(F,F)_H\geq 0$ for all $F$. 
\end{description}

As a special case suppose  $\mathcal{P}^i=\mathcal{P}$ for all $i=1,2,3$; then (\ref{infinitetripleC}) becomes 
\bq
\left\{E,F,G\right\}=   \int_{S^1}\!{\rm d}\theta_1 \int_{S^1}\!{\rm d}\theta_2  \int_{S^1}\!{\rm d}\theta_3 \,  
\mathcal{C}(\theta_1,\theta_2,\theta_3)\, 
\mathcal{P}(\theta_1)E_u\, \, \mathcal{P}(\theta_2)  F_u\, \, \mathcal{P}(\theta_3)G_u\,. 
\label{syminfinitetripleC}
\eq
As a further specialization, suppose  $\mathcal{C}(\theta_1,\theta_2,\theta_3)$  is given by 
\bq
\mathcal{C}(\theta_1,\theta_2,\theta_3)=A(\theta_1,\theta_2)+ A(\theta_2,\theta_3) + A(\theta_3,\theta_1)
\label{doublec}
\eq
where $A$ is any antisymmetric function, i.e., 
\bq
A(\theta_1,\theta_2)=-A(\theta_2,\theta_1)\,.
\label{aasyma}
\eq
The form (\ref{doublec}), assuming  (\ref{aasyma}), assures complete antisymmetry of $\mathcal{C}$. 

Finally, a particularly interesting, self-contained,  case would be to suppose the $A$'s come from some Poisson bracket, according to 
\begin{equation}
A(\theta_1,\theta_2) =\{U(\theta_1),U(\theta_2)\}\,.
\end{equation}
It would be quite natural to choose  the entropy,  $S$, to be a Casimir function of this bracket and to choose this bracket as the Hamiltonian part of the metriplectic system with symmetric bracket given by (\ref{infinitemetri}).   We give an example of this construction in Sec.~\ref{sssec:gardnersym}.

It is evident that one can construct a wide variety of symmetric brackets based on triple brackets.  For example, one can choose the pseudo-differential operators from the list $\{\mathcal{I}_d, d/d\theta, (d/d\theta)^{-1}, \mathcal{H}\}$, where 
$\mathcal{I}_d$ is the identity operator, and the Hamiltonian, $H$, and entropy (Casimir) $C$ could be one of the following functionals:
\bqy
H_0&=&\int_{S^1}\!{\rm d}\theta\,  u  
\label{gardcasi}
\\
H_2&=&\int_{S^1}\!{\rm d}\theta\,  u^2/2 
\label{quadkdv}
\\
H_1&=&\int_{S^1}\!{\rm d}\theta\,  u'^2/2 \\
H_{KdV}&=& \int_{S^1}\!{\rm d}\theta\,  
\left( u ^3 +   u'^2/2\right).
\eqy

In the Sec.~\ref{sssec:gardnersym}  we will construct a  metriplectic system based on  the Gardner bracket  (\ref{KdV_bracket}) of Sec.~\ref{ssec:gardner}.   To avoid complications, we choose a  simple example, yet one  that  displays  general features of a large class of  $1 + 1$  energy conserving dissipative system.

%%%%%%%%%%%%%%%%%%%
\subsubsection{Metriplectic systems based on the Gardner  bracket}
\label{sssec:gardnersym}

For simplicity we choose  $\mathcal{P}_i=\mathcal{I}_d$ for all $i$, and as mentioned above,  we suppose $A(\theta_1,\theta_2)$ is generated from the Gardner bracket \eqref{KdV_bracket}, i.e., 
\bq
A(\theta_1,\theta_2):=\{U(\theta_1),U(\theta_2)\} = \int_{S^1}\!d\theta\, \delta(\theta-\theta_1)\frac{d}{d\theta} \delta(\theta-\theta_2)
=\delta'(\theta_1-\theta_2)\,,
\label{gardner}
\eq
where prime denotes differentiation with respect to argument
and $\delta'(\theta_1-\theta_2)$ is defined by
\begin{align*}
\int_{S^1} {\rm d} \theta_1\int_{S^1} {\rm d} \theta_2
\delta'(\theta_1-\theta_2) f(\theta_1) g(\theta_2)
&=-\int_{S^1} {\rm d} \theta_1\int_{S^1} {\rm d} s\,
\delta'(s) f(\theta_1) g(\theta_1-s)
= \int_{S^1} {\rm d}\theta_1 f(\theta_1) g'(\theta_1) \\
&= - \int_{S^1} {\rm d} \theta_1\int_{S^1} {\rm d} \theta_2
\delta'(\theta_2-\theta_1) f(\theta_1) g(\theta_2)
\end{align*}
for any $f, g \in C^{\infty}(S^1)$, which shows that $\delta'(\theta_2-\theta_1) = - \delta'(\theta_1-\theta_2)$.
With this choice for $A$ we obtain
\[
\mathcal{C}(\theta_1,\theta_2,\theta_3)=\delta'(\theta_1-\theta_2)+ \delta'(\theta_2-\theta_3) + \delta'(\theta_3-\theta_1)\,,
\]
and  Eq.~(\ref{syminfinitetripleC}) becomes
\begin{align}
\label{gardtriple}
\left\{E,F,G\right\}&= \int_{S^1}\!{\rm d}\theta_1 
\int_{S^1}\!{\rm d}\theta_2  \int_{S^1}\!{\rm d}\theta_3 \,  
\left[\delta'(\theta_1-\theta_2) + \delta'(\theta_2-\theta_3) 
+ \delta'(\theta_3-\theta_1) \right]\, 
E_u(\theta_1)\,   F_u(\theta_2) \, G_u(\theta_3)\nonumber\\
& = \left(\int_{S^1} {\rm d}\bar\theta\, G_u (\bar\theta)\right)\int_{S^1} {\rm d}\theta\,F_u(\theta)E'_u(\theta) +
\left(\int_{S^1} {\rm d}\bar\theta\, E_u (\bar\theta)\right)\int_{S^1} {\rm d}\theta\,G_u(\theta)F'_u(\theta) \nonumber\\
& \qquad +
\left(\int_{S^1} {\rm d}\bar\theta\, F_u (\bar\theta)\right)\int_{S^1} {\rm d}\theta\,E_u(\theta)G'_u(\theta).
\end{align}

We shall construct a metriplectic system of the form 
\[
\dot{F}= \left\{H,F,G\right\} + 
\int_{S^1}\!{\rm d}\theta'  \int_{S^1}\!{\rm d}\theta'' 
\left\{U(\theta'),S,F\right\}
\mathcal{G}(\theta',\theta'') \left\{U(\theta''),S, G\right\},
\]
using the Gardner bracket (\ref{gardcasi}). 

Observe if we now set
$F =H_0$, the Casimir for the Gardner bracket
(\ref{gardcasi}), then, since $\delta H_0/\delta u = 1$, we
obtain 
\begin{equation}
\label{gardner_triple}
\left\{F,H_0,G\right\}= \int_{S^1}{\rm d}\theta\,  F_u G'_u
\end{equation}
which is precisely the Gardner bracket. To see this, let us
compute, for example, the integral in the third term of
\eqref{gardtriple}. Changing variables $s = \theta_3 - \theta_1$ we get
\begin{align*}
\int_{S^1}\!{\rm d}\theta_1 \int_{S^1}\!{\rm d}\theta_2  \int_{S^1}\!{\rm d}\theta_3 \,  
 \delta'(\theta_3-\theta_1)\, 
E_u(\theta_1)\, G_u(\theta_3) = 
- \int_{S^1}\!{\rm d}s \int_{S^1}\!{\rm d}\theta_3
\delta'(s)\, E_u(\theta_3-s)\, G_u(\theta_3)
= \int_{S^1} {\rm d} \theta_3\, E_u'(\theta_3)G_u\theta_3).
\end{align*} 
A similar computation shows that the first and second terms
vanish.

In order to construct the symmetric bracket in (\ref{infinitemetri}), we need the following, computed using
\eqref{gardtriple}:
\bqy
\left\{U(\theta),H, G\right\}&=&-\left( \int_{S^1}\!{\rm d}\bar\theta\, G_u(\bar\theta)\right)
\, H_u'(\theta) +
\int_{S^1}\!{\rm d}\bar\theta\,  G_u(\bar\theta) 
 H_u'(\bar\theta)
+ \left( \int_{S^1}\!{\rm d}\bar\theta\, H_u(\bar\theta)\right)   G_u'(\theta).
\label{innertrip}
\eqy

Now  with the counterpart of (\ref{innertrip}) for the functional $F$ with $U(\theta')$,  a choice for $H$, and a choice for $\mathcal{G}$, we can construct $(F,G)_H$. 
We make the following choices:
\begin{align}
& H_2(u)=
\int_{S^1}\!{\rm d} \theta \, \,  \frac{u^2}{2}\,,  \qquad \, S(u):=H_0(u)=\int_{S^1}\!{\rm d}\theta\,\,  u
\label{HS}
\\
&\mathcal{G}(\theta',\theta'')= \delta(\theta'-\theta'')\,.
\end{align} 

Now choose $H_2$ from (\ref{HS}) and insert it into (\ref{innertrip}) which gives
\bq
\left\{U(\theta),H_2, G\right\}= -\left( \int_{S^1}\!{\rm d}\bar\theta\, G_u(\bar\theta)\right)u'(\theta) +
\int_{S^1}\!{\rm d}\bar\theta\,  G_u(\bar\theta)  u'(\bar\theta) + S G_u'(\theta) 
\label{innertrip2}
\eq
and to construct the symmetric bracket \eqref{infinitemetri},
we need
\bqy
\left\{U(\theta''),H_2, S\right\}&=&  - u'(\theta'').
\eqy
Thus, the equations of motion are
\[
\frac{d}{dt}F = \{F,H_0,H_2\} + (F,S)_{H_2}
\]
where
\begin{equation}
\label{last_symmetric}
\big(F,S\big)_{H_2}=\int_{S^1}\!{\rm d}\theta'  \int_{S^1}\!{\rm d}\theta'' \left\{U(\theta'),H_2,F\right\}
\mathcal{G}(\theta',\theta'') \left\{U(\theta''),
H_2, S\right\}.
\end{equation}
This yields
\begin{equation}
\label{gd3eom}
u_t - u_{\theta} = S\,  u_{\theta\theta} + Q \qquad {\rm with}\qquad 
Q:=  \int_{S^1}\!{\rm d}\theta' |u_{\theta'}|^2\,.
\end{equation}

Equation (\ref{gd3eom}) has several interesting features. For fixed given constant $S$ and $Q$, it is a linear equation composed of  the heat equation with a source and with the inclusion of a linear advection term.  One can proceed to solve this equation by the usual method of constructing a temporal Green's function out of the heat kernel and expanding in a Fourier series.  After such a solution is constructed, one must enforce the fact that the global quantities  $S$ and $Q$  are both time dependent and, importantly, dependent on the solution so constructed.  Only after these constraints are enforced would one actually have a solution.  Pursuing this construction, although interesting,  is outside the scope of the present paper and will be treated elsewhere. 

We observe that the equation \eqref{gd3eom} is metriplectic.
Indeed, by construction, we have a Poisson bracket \eqref{gardner_triple} (the Gardner bracket) and a symmetric
bracket \eqref{last_symmetric}. Since these were constructed
out of triple brackets, property (iii) of Definition in
Section \ref{metridef} holds. Positive semidefiniteness
of the symmetric bracket follows from \eqref{innertrip2}.

The nature of the dissipation of (\ref{gd3eom}) is of particular interest in that it involves  the global quantities $S$ and $Q$.    This is reminiscent of collision operators, such as that due to Boltzmann and generalized nonlinear Fokker-Planck operators such as those due to Landau, Lenard-Balescu,  and others (see, e.g., \cite{Morrison1986}).   The usual dissipation in $1+1$ systems is local in nature (see Sec.~\ref{ssec:hybrid}) and dissipates energy.  Thus the metriplectic construction of this section has pointed to a quite natural type of dynamical system that has dynamical versions of both the first   and second laws of thermodynamics.   The pathway for constructing other systems with nonlinear and dispersive Hamiltonian components, other kinds of dissipation, etc. is now cleared, and some will be considered in  future publications.

\subsubsection{Some metriplectic generalizations}
\label{sssec:comments}

It is evident that many generalizations are possible.  We mention a few. 

\begin{itemize}
\item Without destroying the symmetries or formal metriplectic bracket properties we could allow one or both of the
functions $C$ and $\mathcal{G}$ to depend on the field variable $u$ or even contain pseudodifferential operations.  In fact, such ideas were used in similar brackets in \cite{FlierlMor} to facilitate numerical computation.
\item  It is clear how to generalize (\ref{infinitemetri}) to preserve more constraints, say $I_1,I_2,\dots$,  in addition to $H$.  One simply first constructs the completely antisymmetric multilinear brackets $\left\{E,F,G,H, \dots\right\}$ paralleling  (\ref{infinitetripleC}), and then, analogous to (\ref{infinitemetri}), constructs
\bq
(F,G)_{H,I_1,I_2,\dots}=\int_{S^1}\!{\rm d}\theta'  \int_{S^1}\!{\rm d}\theta'' \left\{U(\theta'),H,I_1,I_2,\dots, F\right\}
\mathcal{G}(\theta',\theta'') \left\{U(\theta''),H,I_1,I_2,\dots, G\right\}\,.
\label{infinitemetrimulti}
\eq
The bracket $(F,G)_{H,I_1,I_2,\dots}$ is guaranteed to be symmetric, conserve the invariants, and be positive semidefinite.  
\item It is of general interest to have metriplectic systems  of the form 
\[
\dot{F}= \left\{H,F,G\right\} + 
\int_{S^1}\!{\rm d}\theta'  \int_{S^1}\!{\rm d}\theta'' 
\left\{U(\theta'),S,F\right\}
\mathcal{G}(\theta',\theta'') \left\{U(\theta''),S, G\right\}
\]
(such as our example of Sec.~\ref{sssec:gardnersym}) for a suitably chosen function $G$; here $H$ is the Hamiltonian and $S$ is the entropy.  Exploring the mathematics of when this is possible is an area to pursue. 
\item The construction here is easily extendable to higher spatial dimensions.  For example consider the following triple bracket given in \cite{IBB(1991)}:
\bq
\{E,F,G\}=\int_{\mathcal{D}}\!d^6 z \,\,  E_f\left[F_f,G_f\right]\,,
\label{vlasovtriple}
\eq
\end{itemize}
where $z=(q,p)$ is a canonical six-dimensional phase space variable, $f(z,t)$ is a phase space density, as in Vlasov theory, the  `inner' Poisson bracket is defined by
\bq
[f,g]=f_q\cdot g_p- f_p\cdot g_q\,.
\eq
We assume that the domain $\mathcal{D}$ with boundary conditions enables us to set all surface terms obtained by  integrations by parts to zero, thereby assuring complete antisymmetry.  Inserting the quadratic Casimir $C_2:=\int_{\mathcal{D}} d^6 z \, f^2/2$ into (\ref{vlasovtriple}) gives
\[
\{F,G\}_{VP}=\{C_2,F,G\}=\int_{\mathcal{D}}\!d^6 z \, f\, \left[F_f,G_f\right]\,,
\]
 the Lie-Poisson bracket for the Vlasov-Poisson system, as given in \cite{Morrison80}.    Thus, this bracket with the quadratic  Casimir is  formally akin to the construction given in Sec.~\ref{generaltheory} (although we note it reduces to  a good bracket for any Casimir and  in this way is like the case of  $\mathfrak{so}(3)$ of Sec.~\ref{sssec:special}).
 The triple bracket of (\ref{vlasovtriple}) can be used in a generalization of the bracket of (\ref{infinitemetri}) to obtain a variety of energy conserving collision operators, with a wide choice of Casimirs as entropies.

%%%%%%%%%%%%%%%%%%%
%%%%%%%%%%%%%%%%%%%
\subsection{Hybrid dissipative structures} 
\label{ssec:hybrid}

Even if a system is not metriplectic, it is of interest to
see if it can be obtained from an equation which consists
of a Hamiltonian part and a gradient part with respect to a
suitable Poisson bracket and metric, respectively.

For KdV-like equations,  energy (the Hamiltonian) is generally not conserved when dissipation is added to the system.  This is common for physical systems, but a more complete model would conserve energy while accounting for heat loss, i.e., entropy production.  In the terminology of \cite{Morrison2009}, models that lose energy, such as those treated here and those  described by the double bracket formalism of \S\ref{double}, are \textit{incomplete}, while those that do represent dynamical models of the laws of thermodynamics, such as metriplectic systems, are termed complete. Although incomplete systems do not conserve energy, they may conserve other invariants, and building this in, represents an advantage  of various bracket formulations.   Thus, we construct incomplete hybrid Hamiltonian and dissipative dynamics by combining a Hamiltonian and a gradient vector field according to the prescription  
\bq
u_t= \{u,H\} +  (u,S)
\eq
where $u \mapsto\{u,H\} $ is a Hamiltonian vector field 
generated by $H$ and $u \mapsto (u,S)$ is a gradient vector 
field generated by $S$ (which could be $H$). Thus, $(\,,)$
is, up to a sign, an inner product on the space of 
functions $u$.

\medskip
 
Consider the following examples:

\begin{itemize}
\item  With the usual KdV Hamiltonian of (\ref{kdvham}) and the Gardner bracket of (\ref{KdV_bracket}) describing the Hamiltonian vector field, together with the choice 
\[
S(u)=H_1(u)=\frac{1}{4\pi}\int_{-\pi}^\pi \!{\rm d}\theta \, (u_\theta)^2
\]
we obtain for the gradients of Corollary \ref{h_one_cor}
\begin{description}
\item (i) $\quad u_t= \{u,H\} -  \nabla^1H_1= - u_{\theta\theta\theta}  + 6 u u_{\theta} - u $
\item (ii) $\ \ \  u_t= \{u,H\} -  \nabla H_1= - u_{\theta\theta\theta} + 6  u u_{\theta} + u_{\theta\theta} $
\item (iii) $\   \ u_t= \{u,H\} -  \nabla^2 H_1=  -u_{\theta\theta\theta} + 6  u u_{\theta} -\mathcal{H}(u_\theta)$
\end{description}
which is the KdV equation of (\ref{kdv}) with the inclusion of a new term that describes dissipation. Case (i) corresponds to simple linear damping,  case (ii) to `viscous' diffusion, and case (iii) to the equation of  \cite{OS1969} which adds a term to the KdV equation that describes  Landau damping. For these systems the KdV invariant $\int_{-\pi}^\pi \!{\rm d}\theta\, u^2$ serves as a Lyapunov function. 
\item   Choosing  $H=S=H_1$, the K\"ahler Hamiltonian flow of 
\eqref{kdv4} together with the dissipative flow generated by 
\eqref{kahler}, yields
\[
u_t= \{u,H_1\} -  \nabla^2 H_1=-  u_\theta - \mathcal{H}(u_\theta)
\]
which describes simple advection with Landau damping.  This equation possesses the damped traveling wave solution. 

\item We note that we can derive the heat equation from a symmetric bracket of the form \eqref{infinitemetri}, again with $\mathcal{G}(\theta',\theta'')=\delta(\theta'-\theta'')$.  Using this 
$\mathcal{G}$ and noting
$\{U(\theta), H_0,F\}=  G'_u(\theta)$, we obtain 
\begin{equation}
\label{Gardner_symmetric}
(F,G)_{H_0}= \int_{S^1}\!d\theta\,  F'_u G'_u\,. 
\end{equation}

Let us compute, for example, $\dot{F}(u) = (F,-H_2)_{H_0}$
(see \eqref{quadkdv}). Since $\delta H_2/ \delta u = -u$, we 
obtain
\[
\int_{S^1} {\rm d}\,\theta F_u\, \dot{u} = 
\frac{d}{dt}F(u) = (F,H_2)_{H_0} = 
-\int_{S^1} {\rm d}\theta\, F_u' u'=
\int_{S^1} {\rm d}\theta\, F_u u''.
\]
This yields 
\[
u_t =  u_{xx}
\]
which is the heat equation.

\end{itemize}
From these examples it is clear how a variety of hybrid Hamiltonian and dissipative flows can be constructed from the machinery we have developed.  For example, if we replace the KdV Hamiltonian by $
H(u)=\int_{S^1}\!{\rm d}\theta\,\left(\frac{1}{2}u\mathcal{H}(u_{\theta})+\frac{1}{3}{u^3}\right)$ we obtain the Benjamin-Ono equation with the various dissipative terms. Related ideas
apply to fluid dynamics may be found in \cite{GBHo2012}.

%%%%%%%%%%%%%%%%%%%
%%%%%%%%%%%%%%%%%%%
\begin{acknowledgement}
AMB was partially supported by NSF grants DMS-090656 and DMS-1207693.  PJM was supported by U.S. Department of Energy contract DE-FG05-80ET-53088. TSR was partially supported by Swiss NSF grant 200021-140238, and by the government grant of the Russian Federation for support of research projects implemented by leading scientists, Lomonosov Moscow State University under the agreement No.~11.G34.31.0054.
\end{acknowledgement}

%\begin{thebibliography}{300}
%
%\begin{thebibliography}{99.}

\end{document}